\numberwithin{equation}{section}
\newcommand{\cstar}{{{C}}^{\ast}}%
\newcommand{\R}{{\mathbb{R}}}%
\newcommand{\Z}{{\mathbb{Z}}}%
\newcommand{\NN}{{\mathbb{N}}}%
\newcommand{\id}{{\mathbf{1}}}%%
\newcommand{\unit}{1}%%%
\newcommand{\zeromap}{{\mathbf{0}}}%%%
\newcommand{\Bl}{{\mathfrak B}}%%%%%%%%%%
\newcommand{\Znu}{\Z^{\nu}}%
\newcommand{\vp}{\varphi}%%%%
\newcommand{\ome}{\omega}%%%%
\newcommand{\I}{{\mathrm{I}}}%
\newcommand{\J}{{\mathrm{J}}}%
\newcommand{\Al}{\mathcal{A}}%
\newcommand{\Ale}{\Al_{+} }%
\newcommand{\Alo}{\Al_{-}}%
\newcommand{\Ali}{{\Al}(\{i\})}%
\newcommand{\AlI}{{\Al}({\I})}%
\newcommand{\AlJ}{{\Al}({\J})}%
\newcommand{\AlIe}{\AlI_{+}}%
\newcommand{\AlIo}{\AlI_{-}}%
\newcommand{\Fp}{F_{+}}%
\newcommand{\Fm}{F_{-}}%
\newcommand{\Gp}{G_{+}}%
\newcommand{\Gm}{G_{-}}%
\newcommand{\al}{\alpha}%%%%%%
\newcommand{\alt}{\alpha_{t}}%
\newcommand{\Hil}{\mathscr{H}}%
\newcommand{\Qseq}{\mathscr{Q}}%
\newcommand{\OPERA}{\hat{\mathcal{O}}}%
\newcommand{\Ome}{\Omega}%
\newcommand{\Hilome}{{\Hil}_{\ome}}%
\newcommand{\piome}{\pi_{\ome}}%
\newcommand{\Omeome}{\Ome_{\ome}}%
\newcommand{\aicr}{a_i^{\ast}}%
\newcommand{\ai}{a_i}%
\newcommand{\ajcr}{a_j^{\ast}}%
\newcommand{\aj}{a_j}%
\newcommand{\core}{\Al_{\circ}}%
\newcommand{\coree}{{\core}_+}%
\newcommand{\coreo}{{\core}_-}%
\newcommand{\Qast}{Q^{\ast}}%
\newcommand{\cha}{{\Psi}}%
\newcommand{\del}{\delta}%%%%
\newcommand{\chanic}{\cha}%
\newcommand{\chanicast}{\chanic^{\;\ast}}%%
\newcommand{\delchanic}{\del_{\chanic}}%
\newcommand{\delchanicast}{\del_{\chanic}^{\;\;\;\ast}}%
\newcommand{\alchanict}{\al^{\!\chanic}_{\,t}}%
\newcommand{\CA}{{\cal{C}}}%
\newcommand{\Derchanic}{d_{\chanic}}%%%%
\newcommand{\twoisan}{\bigl\{ 2i-1,\, 2i,\, 2i+1 \bigr\}}%%%%%
\newcommand{\twoitwoj}{\bigl\{(2i-1,2j),\, (2i,2j-1),\, (2i,2j),\, (2i+1,2j),\,
(2i,2j+1)\bigr\}}%%%%%%
\newcommand{\twoij}{\J_{(2i,2j)}}%%%%%%
\newcommand{\Altwoijo}{\Al\bigl(\twoij\bigr)_{-}}%%%%
\newcommand{\chanictwoisan}{\chanic\bigl(\twoisan\bigr)}%%%%%
\newcommand{\chanictwoisanast}{\chanic\bigl(\twoisan\bigr)^{\ast}}%%%%%
\newcommand{\chanictwoij}{\chanic\bigl(\twoij\bigr)}%%%%%
\newcommand{\atwokcr}{a_{2k}^{\, \ast}}%
\newcommand{\atwok}{a_{2k}}%
\newcommand{\atwokpcr}{a_{2k+1}^{\, \ast}}%
\newcommand{\atwokp}{a_{2k+1}}%
\newcommand{\atwolcr}{a_{2l}^{\, \ast}}%
\newcommand{\atwol}{a_{2l}}%
\newcommand{\atwolmcr}{a_{2l-1}^{\, \ast}}%
\newcommand{\atwolm}{a_{2l-1}}%
\newcommand{\atwoitwojcr}{a_{(2i, 2j)}^{\, \ast}}%
\newcommand{\atwoimtwoj}{a_{(2i-1, 2j)}}%%%%
\newcommand{\atwoitwojm}{a_{(2i, 2j-1)}}%%%%
\newcommand{\atwoiptwoj}{a_{(2i+1, 2j)}}%%%%
\newcommand{\atwoitwojp}{a_{(2i, 2j+1)}}%%%%
\newcommand{\Qnic}{Q_{\rm{Nic}}}%%%%
\newcommand{\Qnicast}{{\Qnic}^{\!\!\!\ast}}%%%%
\newcommand{\Qnicastsq}{{\Qnic}^{\!\!\!\ast 2}}%%%%
\newcommand{\Hnic}{H_{\rm{Nic}}}%%%%
\newcommand{\bra}[1]{\langle #1|}%%%%
\newcommand{\ket}[1]{|#1 \rangle}%%%%			
\newcommand{\Utome}{U_{\ome}(t)}%
\newcommand{\Home}{H_{\ome}}%
\newcommand{\Ikl}{{\rm{I}}_{[2k,2l]}}%%%%
\newcommand{\Izeroone}{{\rm{I}}_{[0,2]}}%%%%
\newcommand{\Izerotwo}{{\rm{I}}_{[0,4]}}%%%%
\newcommand{\Izerothree}{{\rm{I}}_{[0,6]}}%%%%
\newcommand{\Izerolm}{{\rm{I}}_{[0,2l]\times [0,2m]}^{\,2}}%%%%
\newcommand{\zetai}{\zeta_{i}}%%%%
\newcommand{\opkappai}{\hat{\kappa}_{i}}%%%%
\newcommand{\opkappaiminus}{\hat{\kappa}_{i-1}}%%%%
\newcommand{\opkappaiplus}{\hat{\kappa}_{i+1}}%%%%
\newcommand{\IDplkl}{r^{+}_{[2k,2l]}}%%%%
\newcommand{\IDmikl}{r^{-}_{[2k,2l]}}%%%%
\newcommand{\IDplzeroone}{r^{+}_{[0,2]}}%%%%
\newcommand{\IDmizeroone}{r^{-}_{[0,2]}}%%%%
\newcommand{\IDplzerotwo}{r^{+}_{[0,4]}}%%%%
\newcommand{\IDmizerotwo}{r^{-}_{[0,4]}}%%%%
\newcommand{\IDplzerothree}{r^{+}_{[0,6]}}%%%%
\newcommand{\IDmizerothree}{r^{-}_{[0,6]}}%%%%
\newcommand{\Imtil}{\widetilde{\rm{I}}_{[-m-1,m+1]}}%%%%
\journalname{J. Stat. Phys.}
\begin{document}

\title{Ergodicity breaking and Localization
 of  the Nicolai supersymmetric  fermion lattice  model%\thanks{Grants or other notes
%about the article that should go on the front page should be
%placed here. General acknowledgments should be placed at the end of the article.}
}

\titlerunning{Ergodicity breaking and Localization
 of  the Nicolai SUSY fermion lattice model}        % if too long for running head

\author{Hajime Moriya}

%\authorrunning{Short form of author list} % if too long for running head

\institute{H. Moriya \at
              Kanazawa University
College of Science and Engineering. Japan.
         %  \\
%             \emph{Present address:} of F. Author  %  if needed
}

\date{July 4 2018}
% The correct dates will be entered by the editor

\maketitle

\begin{abstract}
We investigate   dynamics  of a supersymmetric  fermion  lattice model  
   introduced by Nicolai [J. Phys. A. Math. Gen. {{9}} (1976)].
 We show that the Nicolai model has   infinitely many   local constants of motion 
for its  Heisenberg time evolution,  and therefore  ergodicity   
 (with respect to thermal equilibrium  states) breaks.
  It has   infinitely many degenerated classical 
 ground states. This phenomena is  considered  as localization  at zero temperature.
 From  a  viewpoint of  perturbation theory,  we explain
  why delocalization  is suppressed 
at zero temperature despite  its  disorder-free  translation-invariant quantum interaction.
\keywords{Supersymmetric  fermion lattice model.  \and 
Breaking ergodicity. 
 \and Local fermionic constants of motion. Quantum integrability.}
 \subclass{82B20 \and 81Q60}
\end{abstract}

\section{Introduction}
\label{sec:INTRO}
 We consider a spinless fermion lattice model proposed  by Nicolai  \cite{NIC}.
 The Nicolai model    satisfies the  same algebraic relation 
 as  ${{\cal{N}}}=2$  supersymmetry \cite{WEIN},  
although  it consists of only fermions without   bosons.
 Using  a  general  formulation   given in \cite{AHP16}  
  we formulate  the Nicolai model as a supersymmetric $\cstar$-dynamical system. 
We investigate its dynamical properties from a rigorous 
   $\cstar$-algebraic approach, see e.g.  \cite{BRII} \cite{THIR4}.

We show that the  Nicolai model exhibits several  non-ergodic  properties.
 It has  infinitely many local constants of motion that are frozen  under  the Heisenberg 
 time evolution  defined on the  infinitely extended  system. 
These local constants of motion 
 are all fermionic, and  the  number of them increases exponentially
 with respect to the volume of  subsystems.  
We can  readily  show   breaking ergodicity  for the Nicolai model
  from  such   local constants of motion.
Precisely, the ergodicity  as defined by  Mazur  \cite{MAZUR}  
 is broken for all  KMS (thermal equilibrium) states \cite{HHW}.
 Furthermore  we show that there exist   infinitely many classical  ground states  on the  Fock space.
 Those highly degenerated ground states   can be  considered  as  localization phenomena 
 at zero temperature.

It is widely believed that  ergodicity  breakdown is caused by integrability. 
We shall investigate   ``quantum integrability'' that   the  Nicolai model possesses.
In  \cite{CAUX}  Caux-Mossel proposed  a new characterization  of 
 quantum integrability intended for  many-body quantum dynamics.
In their language, the Nicolai model  is categorized to  the class called  ``constant quantum   integrable''. 
 This  class includes 
 non-interacting (free) fermion models as its  typical example, and it  
  is more integrable than  the  
 ``linear-quantum-integrable class''  to which the Heisenberg spin chain   belongs.

 As  mentioned above,  the Nicolai model exhibits   
certain  many-body localization  at zero temperature,  although there is  no disorder. 
We  shall recall a  general   mechanism of delocalization  
 for  quantum many-body dynamics
 proposed  by De Roeck-Huveneers \cite{Deloc}.  It  
 is  based on  perturbation argument under   certain 
  assumptions upon   translation-invariant Hamiltonians.
   The Nicolai model gives  an exceptional disorder-free quantum many-body Hamiltonian  
to which the  scenario  of  \cite{Deloc}  can not  apply.
 The precise statement will be given in   the main text.

For concreteness, we   deal with  the  Nicolai model 
  on one-dimensional integer lattice.
 However, all   statements shown in this paper 
 are valid  for  the  Nicolai model defined on any dimensional integer lattice 
with some obvious  modification.

\section{Nicolai supersymmetric fermion lattice model}
\label{sec:SUSY-FERMION}

\subsection{Supersymmetry}
\label{subsec:SUSY}
This subsection   provides  a brief summary of supersymmetry (SUSY).
  For  general references of supersymmetry see e.g. \cite{WEIN}.

 Let  $F$ be   a  positive  operator on a Hilbert space $\Hil$
 whose eigenvalues are non-negative integers.
%% that is an observable   a number of fermions.
Then the  Hilbert space has a graded structure $\Hil=\Hil_{+}\oplus  \Hil_{-}$,
where   $(-1)^{F}$ 
has  eigenvalue $+1(-1)$ for any vector in  
$\Hil_{+}(\Hil_{-})$, respectively.
 Consider  a conjugate pair of   linear operators $Q$ and $\Qast$ on $\Hil$,
 where $\ast$ denotes the  adjoint of linear operators.
Assume that they  are  fermionic,   
\begin{equation}
\label{eq:Qodd-intro}
\{(-1)^{F},\ Q\}= \{(-1)^{F},\ Q^{\ast}\}=0. 
\end{equation}
Assume further  that they  are  nilpotent,     
\begin{align}
\label{eq:Qnil-intro}
Q^{2}=0={Q}^{\ast\,2}.
\end{align}
We define  the  Hamiltonian as   
\begin{align}
\label{eq:Hsusy}
H:=\{ Q,\; {Q}^{\ast} \}\equiv Q {Q}^{\ast}+{Q}^{\ast}Q.
\end{align}
From  \eqref{eq:Hsusy} and  \eqref{eq:Qnil-intro}
 we see that  
\begin{equation}
\label{eq:HQkakan}
[H,\ Q]=[H,\ Q^{\ast}]=0.
\end{equation}  
The   algebraic structure  satisfied by   
 $\{Q,\, {Q}^{\ast},\, H,\, (-1)^N,\,\Hil\}$ 
 is called   ${\cal{N}}=2$ supersymmetry.
  The Nicolai model which will be introduced  in this section  is a
  fermion lattice model that has  ${\cal{N}}=2$  supersymmetry.

\subsection{Fermion lattice system}
\label{subsec:FERSYSTEM}
We  recall  a general $\cstar$-algebraic  formulation of  fermion lattice systems
 by which we will  provide   precise formulation of supersymmetric dynamics 
 of  the  Nicolai model.
We consider   integer lattice  $\Znu$ of  any  $\nu\in\NN$.
%% For any  two sites  $x=(x_i), \ y=(y_i) \in \Znu$
%%%the distance between them is 
%%%$|x-y|:=\displaystyle{\max_{1\le i \le \nu}} |x_i-y_i|$.
 For any subset $\I$ of $\Znu$ we denote the number of sites in $\I$ by  $|\I|$. 
 The notation `$\I \Subset \Znu$' means that a subregion  $\I\subset\Znu$ 
 contains  finite number of sites  in it.

We consider interacting  spinless fermions  over  $\Znu$.
Let $\ai$ and $\aicr$ denote the annihilation  
operator and the creation operator of a spinless fermion at $i\in \Znu$, respectively. 
Those obey  the canonical anticommutation relations (CARs):   
\begin{align}
\label{eq:CAR}
\{ \aicr, \aj \}&=\delta_{i,j}\, \unit, \nonumber \\
\{ \aicr, \ajcr \}&=\{ \ai, \aj \}=0.
\end{align}
For each site  $i\in \Znu$  the fermion number operator is defined by
\begin{align}
\label{eq:ni}
n_i:=\aicr \ai.
\end{align}
A formal  infinite sum  $F:=\sum_{i\in\Znu}n_i$ will denote 
 the total fermion number operator.

For each $\I\Subset\Znu$, 
 $\AlI$ denotes  the  finite-dimensional algebra
generated by $\{\ai, \, \aicr\, ;\;i\in \I\}$.
 For   $\I \subset \J\Subset \Znu$, $\AlI$ is naturally 
 imbedded into $\AlJ$ as a subalgebra. We define the local algebra as
\begin{equation} 
\label{eq:CARloc}
\core:=\bigcup_{\I \Subset \Znu }\AlI.
\end{equation}
Taking the  norm completion of the normed $\ast$-algebra $\core$
 we obtain a   $\cstar$-algebra $\Al$ that is  called  the CAR algebra.

Let $\gamma$ denote the automorphism on the $\cstar$-algebra 
 $\Al$ determined  by    
 \begin{equation}
\label{eq:CARgamma}
\gamma(\ai)=-\ai, \quad \gamma(\aicr)=-\aicr,\quad \forall i\in \Znu.
\end{equation}
The grading automorphism $\gamma$ is heuristically given by $\rm{Ad}(-1)^F$.
Obviously,   $\gamma\circ \gamma={\text{id}}$. 
The  system $\Al$ is decomposed into the even part and the odd part:  
\begin{align}
\label{eq:grad}
\Al&=\Ale\oplus \Alo,\quad 
\Ale= \{A\in \Al| \; \gamma(A)=A\},\quad
\Alo= \{A\in \Al|\;  \gamma(A)=-A\}.
\end{align}
Similarly,  for each  $\I\Subset\Znu$ we  consider a natural  graded  structure,  
\begin{equation}
\label{eq:CARIeo}
\AlI=\AlIe\oplus\AlIo,\ \ \text{where}\ \ 
 \AlIe := \AlI\cap \Ale,\quad  \AlIo := \AlI\cap \Alo,
 \end{equation}
and for the local algebra 
\begin{equation}
\label{eq:core-grad}
\core=\coree\oplus\coreo,\ \ \text{where}\ \ 
 \coree := \core \cap \Ale,\quad  \coreo := \core\cap \Alo.
 \end{equation}
The  graded commutator  on the graded algebra $\Al$ is  defined as  
\begin{align}
\label{eq:gcom}
[\Fp, \;  G]_{\gamma} &= [\Fp, \;  G]
  {\mbox {\ \ for \ }}
\Fp \in \Ale, \ G \in \Al, \nonumber\\ 
[\Fm, \; \Gp]_{\gamma} &= [\Fm,  \; \Gp] {\mbox {\ \ for \ }}
\Fm \in \Alo, \ \Gp \in \Ale, \nonumber\\
[\Fm, \; \Gm]_{\gamma} &= \{\Fm,  \; \Gm\} {\mbox {\ \ for \ }}
\Fm \in \Alo, \ \Gm \in \Alo. 
\end{align}
By  the  CARs 
 \eqref{eq:CAR}   the  $\gamma$-locality holds:  
\begin{equation}
\label{eq:glocality}
[A,\; B]_{\gamma}=0{\text {\ \ for every  \ }} A \in \AlI 
{\text {\ and \ }}B\in \AlJ {\text {\ \ if   \ }} \I\cap\J=\emptyset,\ \I,\J\Subset\Znu.
\end{equation}

We  introduce  some basic  transformations on  the fermion lattice system. 
 Let $\sigma$ denote the shift-translation  automorphism group on $\Al$.
 Namely for   $k\in\Znu$
\begin{align}
\label{eq:sigk}
\sigma_{k} (a_{i})=a_{i+k},\quad 
\sigma_{k} (a_{i}^{\ast})=a_{i+k}^{\ast},
\quad \forall i  \in \Znu.
\end{align}
By $\gamma_{\theta}$ 
$(\theta \in [0,\; 2\pi])$
we denote  the global $U(1)$-symmetry  defined as   
\begin{align}
\label{eq:U1}
\gamma_{\theta} (a_{i})=e^{-i\theta}a_{i},\quad 
\gamma_{\theta} (a_{i}^{\ast})=e^{i\theta}a_{i}^{\ast},
\quad \forall i  \in \Znu.
\end{align}
By definition  $\gamma_{\pi}$ is equal to  the grading $\gamma$
  of  \eqref{eq:CARgamma}.
We may consider 
the  particle-hole transformation $\rho$: 
\begin{align}
\label{eq:particle-hole}
\rho (\ai)=\aicr,\quad 
\rho (\aicr)=\ai,
\quad \forall i  \in \Z.
\end{align}

\subsection{The Nicolai  model}
\label{subsec:NIC}
We  investigate dynamical properties of 
 a supersymmetric fermion lattice model 
  given by Nicolai in  \cite{NIC}.
 We shall introduce the Nicolai model on  integer lattice $\Z$. 
However,  one  can easily extend the Nicolai  model 
 to  any   dimensional integer lattice  $\Znu$.
 All  statements which we will  show are   valid 
   for the multi-dimensional case,    
 see $\S$\ref{subsec:Multi} for the detail.

Let us consider 
the following  formal infinite sum of fermion operators  
\begin{align}
\label{eq:Qnic}
\Qnic:=
\sum_{i \in \Z}  
\chanic(\{2i-1, 2i, 2i+1\}),\quad 
\chanic(\{2i-1, 2i, 2i+1\}):= a_{2i+1} a^{\ast}_{2i} a_{2i-1}.
\end{align}
Then 
\begin{align}
\label{eq:Qnicast}
\Qnicast=\sum_{i \in \Z}  
\chanic(\{2i-1, 2i, 2i+1\})^{\ast},\quad 
\chanic(\{2i-1, 2i, 2i+1\})^{\ast}
= a_{2i-1}^{\ast} a_{2i} a_{2i+1}^{\ast}.
\end{align}
Those are fermionic as in  \eqref{eq:Qodd-intro}
 by definition.  
By heuristic computation we see that those   are  nilpotent as in \eqref{eq:Qnil-intro}:
\begin{equation}
\label{eq:Qnicnil}
\Qnic^{2}=0=\Qnicastsq.
\end{equation}
The  Hamiltonian is defined by the  supersymmetric form   \eqref{eq:Hsusy}:
\begin{equation}
\label{eq:Hnicdef}
\Hnic:=\{\Qnic,\; \Qnicast \}.
\end{equation}
By direct computation we see  that 
\begin{align}
\label{eq:Hnicgutai}
\Hnic&=\sum_{i\in \Z}\bigl\{
a^{\ast}_{2i}a_{2i-1}a_{2i+2}a^{\ast}_{2i+3}
+a^{\ast}_{2i-1}a_{2i}a_{2i+3}a^{\ast}_{2i+2} \nonumber\\
&\ \ \ +a^{\ast}_{2i}a_{2i}a_{2i+1}a^{\ast}_{2i+1}
+a^{\ast}_{2i-1}a_{2i-1}a_{2i}a^{\ast}_{2i}
- a^{\ast}_{2i-1} a_{2i-1}a_{2i+1}a^{\ast}_{2i+1} \bigr\}.
\end{align}
The fermion lattice model defined  by  the above Hamiltonian $\Hnic$ 
 which is generated by the pair of  supercharges $\Qnic$ and 
$\Qnicast$ satisfies   ${\cal{N}}=2$  supersymmetry.
  Hereafter we call this supersymmetry  model  {\it{the Nicolai model}}.

For later sake we shall decompose   the Hamiltonian 
into the classical term $H_{\rm{classical}}$ and the hopping term
 $H_{\rm{hop}}$ as 
 \begin{align}
\label{eq:Hnic-decom}
\Hnic=
H_{\rm{classical}}
+H_{\rm{hop}}
\end{align}
by setting 
\begin{align}
\label{eq:Hnic-cla}
H_{\rm{classical}}
:=\sum_{i\in \Z}
n_{2i}-n_{2i-1}n_{2i}
-n_{2i}n_{2i+1} +n_{2i-1}n_{2i+1},
\end{align}
and 
\begin{align}
\label{eq:Hnic-hop}
H_{\rm{hop}}:=\sum_{i\in \Z}
a^{\ast}_{2i}a_{2i-1}a_{2i+2}a^{\ast}_{2i+3}
+a^{\ast}_{2i-1}a_{2i}a_{2i+3}a^{\ast}_{2i+2}. 
\end{align}

We can immediately see 
\begin{align}
\label{eq:HnicUone}
\gamma_{\theta} (\Hnic)=\Hnic\  \ \forall \theta \in [0,\; 2\pi),
\end{align}
and 
\begin{align}
\label{eq:sigk}
\sigma_{k} (\Hnic)=\Hnic
\quad \forall k  \in 2\Z.
\end{align}
Therefore  the Nicola model  has global 
$U(1)$-symmetry and  $\Z_2$-translation  symmetry  in space (periodicity).
From   
\begin{equation}
\label{eq:Qnicrho}
\rho(\Qnic)=-\Qnicast,\quad 
\rho(\Qnicast)=-\Qnic,
\end{equation}
 and \eqref{eq:Hnicdef}   the model  has the particle-hole symmetry 
\begin{align}
\label{eq:Hnicrhoinv}
\rho(\Hnic)=\Hnic.
\end{align}

We shall reformulate the Nicolai  model 
 as a supersymmetric $\cstar$-dynamical system based on  \cite{AHP16}.
 In this  framework we use  superderivations
 as our  basic building block. 
By  the formal  supercharge operators  
\eqref{eq:Qnic} \eqref{eq:Qnicast} 
   the  following pair of  superderivations are defined rigorously
\begin{equation}
\label{eq:delchanic-IMP}
\delchanic(A):=  
[\Qnic, \;  A]_{\gamma} {\text {\ \ for every \ }} A\in \core,
 \end{equation}
and 
\begin{equation}
\label{eq:delchanicast-IMP}
\delchanicast(A):= 
[\Qnicast, \;  A]_{\gamma} {\text {\ \ for every \ }} A\in \core.
 \end{equation}
Also  we  define 
the  time generator by  
\begin{equation}
\label{eq:Dercha-IMP}
\Derchanic(A):=[\Hnic, \;  A] {\text {\ \ for every \ }} A\in \core.
 \end{equation}

By definition 
the linear map $\delchanic$ 
   satisfies 
\begin{equation}
\label{eq:oddsuperder}
\delchanic\cdot \gamma=-\gamma \cdot \delchanic\ \ 
{\text {on}}\ \core, 
\end{equation}
and  also   the {\it{graded Leibniz rule}}:
\begin{equation}
\label{eq:gleib}
\delchanic(AB)=
\delchanic(A)B+\gamma (A) \delchanic(B)\ \  {\text {for every}}\ A, B\in \core.
\end{equation}
By \eqref{eq:Qnicnil}
 the nilpotent condition is  satisfied: 
\begin{equation}
\label{eq:nil-delcha}
\delchanic\cdot\delchanic=\zeromap =\delchanicast\cdot\delchanicast 
{\text {\ \ on\ \ }} \core. 
\end{equation}
From  \eqref{eq:Hnicdef} 
 the identity that expresses  supersymmetry  follows: 
\begin{equation}
\label{eq:derivationSUSY}
\Derchanic= 
\delchanicast\cdot\delchanic+ \delchanic\cdot\delchanicast 
 {\text {\ \ on  \ }} \core.
 \end{equation}

A state $\vp$ on $\Al$ is called supersymmetric if it is 
invariant under $\delchanic$:
\begin{equation}
\label{eq:susyinv}
\vp \left(\delchanic(A)\right)=0\ \ {\text{for every}}\  A\in \core.
\end{equation}
We will see that there are lots  of  supersymmetric states 
 for the Nicolai model.

 In the remainder of this section, we shall collect crucial results on supersymmetric dynamics.
In fact, those are valid  for general  supersymmetric fermion lattice models 
 of finite-range interactions \cite{AHP16}.
\begin{proposition}
\label{prop:AHPmain}
 There exists  a strongly continuous one parameter
group of $\ast$-automorphisms $\alchanict$ ($t\in \R$)
on  $\Al$ whose pre-generator  is given by 
the derivation $\Derchanic\equiv 
\delchanicast\cdot\delchanic+ \delchanic\cdot\delchanicast$ defined on the local algebra 
 $\core$. 
\end{proposition}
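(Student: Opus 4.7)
The plan is to recognize this as a standard existence result for $C^{\ast}$-dynamics generated by strictly local, uniformly bounded interactions on the CAR algebra, and to verify the hypotheses of that general theorem (already established in \cite{AHP16}) for the Nicolai Hamiltonian.

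First I would repackage the Hamiltonian as an interaction $\Phi$ with finite-volume localized terms. From \eqref{eq:Hnicgutai}, each summand is supported in the five-site window $\{2i-1,2i,2i+1,2i+2,2i+3\}$, so I would set $\Phi(\I)=0$ unless $\I$ is one of these windows, in which case $\Phi(\I)$ is the corresponding even local element of $\Al$. Translation invariance of $\Phi$ under $\sigma_{2k}$ and the obvious operator-norm bound $\|\Phi(\I)\|\le C$ (independent of $i$) make $\Phi$ a strictly local, uniformly bounded, even interaction in the sense of \cite{AHP16}.

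Second, I would invoke the general construction of dynamics for such interactions on the CAR algebra (Bratteli–Robinson, Vol.~II, Theorem 6.2.4, in the even-interaction version for fermion systems used in \cite{AHP16}). Concretely, for an increasing exhaustion $\Lam_n\uparrow\Z$ one forms the local Hamiltonian $H_{\Lam_n}=\sum_{\I\subset\Lam_n}\Phi(\I)$ and the inner automorphism
\begin{equation*}
\alchanict^{(n)}(A):=e^{itH_{\Lam_n}}\,A\,e^{-itH_{\Lam_n}},\qquad A\in\Al.
\end{equation*}
The key estimate is a Lieb--Robinson-type bound on $[\alchanict^{(n)}(A),B]_{\gamma}$ for $A\in\AlI$ and $B$ supported far from $\I$; because $\Phi$ is strictly finite range and uniformly bounded, this bound is immediate by the standard Dyson-series argument and implies that $\{\alchanict^{(n)}(A)\}_n$ is norm-Cauchy for every $A\in\core$. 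The limit $\alchanict(A)$ is then extended by continuity and density to a $\ast$-automorphism of $\Al$; strong continuity in $t$ and the group law follow from uniform estimates on the approximants.

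Third, I would identify the pre-generator. For fixed $A\in\AlI$ with $\I\Subset\Z$, once $\Lam_n$ contains the finite set of $\I$-touching interaction windows, $[H_{\Lam_n},A]$ stabilizes to $\Derchanic(A)=[\Hnic,A]$ as given by \eqref{eq:Dercha-IMP}; differentiating $\alchanict^{(n)}$ at $t=0$ and passing $n\to\infty$ shows that $\Derchanic$ is a restriction of the infinitesimal generator of $\alchanict$, so $\core$ is a core and the group is uniquely determined by $\Derchanic|_{\core}$. The supersymmetric identity \eqref{eq:derivationSUSY} plays no role in the existence proof itself, but guarantees that the even derivation $\Derchanic$ obtained as the anticommutator of the odd superderivations agrees with $[\Hnic,\cdot]$.

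The only genuine subtlety is the fermionic sign bookkeeping, but this is handled uniformly by using graded commutators \eqref{eq:gcom} and the $\gamma$-locality \eqref{eq:glocality}, which is precisely the setting of \cite{AHP16}; with those in place, the argument is essentially the classical Bratteli--Robinson construction, so I do not anticipate a substantive obstacle.
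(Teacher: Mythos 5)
The paper gives no proof of this proposition at all: it is imported verbatim from \cite{AHP16}, where it is established for general supersymmetric fermion lattice models with finite-range interactions. Your reconstruction --- packaging $\Hnic$ as a strictly local, uniformly bounded, even interaction and running the standard Bratteli--Robinson finite-range construction (convergent local dynamics, identification of the pre-generator on $\core$, with $\gamma$-locality handling the fermionic signs) --- is exactly the argument underlying the cited result, so it is correct and takes essentially the same route.
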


From  Proposition  \ref{prop:AHPmain} we can derive   the 
 key  statement in this paper. 
\begin{proposition}
\label{prop:GENtimeinvariant}
 Suppose that   $B\in \core$ is  
annihilated by both   superderivations  
  $\delchanic$ and $\delchanicast${\rm{:}}
\begin{equation}
\label{eq:Bsuper-const}
\delchanic(B)=0=\delchanicast(B).
 \end{equation}
Then it  is invariant under 
the time evolution{\rm{:}}
 \begin{equation}
\label{eq:Btime-const}
\alchanict(B)=B  \quad {\text{for all}}\ t\in \R.
 \end{equation}
\end{proposition}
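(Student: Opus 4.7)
The plan is to reduce the time-invariance statement to an infinitesimal statement at $t=0$ via the standard one-parameter group machinery, and then exploit the supersymmetric identity \eqref{eq:derivationSUSY} to show the pre-generator annihilates $B$. First I would feed the hypothesis $\delchanic(B)=0=\delchanicast(B)$ directly into $\Derchanic = \delchanicast\cdot\delchanic + \delchanic\cdot\delchanicast$, which is valid on $\core$ by \eqref{eq:derivationSUSY}. Since $B\in\core$, both compositions are defined term by term, and
\begin{equation*}
\Derchanic(B) \;=\; \delchanicast\bigl(\delchanic(B)\bigr) \;+\; \delchanic\bigl(\delchanicast(B)\bigr) \;=\; \delchanicast(0) \;+\; \delchanic(0) \;=\; 0.
\end{equation*}
Thus $B$ lies in the kernel of the pre-generator, and no domain subtleties arise because each inner application already returns $0$.

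Next I would invoke Proposition \ref{prop:AHPmain}: the strongly continuous automorphism group $\alchanict$ has $\Derchanic$ on $\core$ as its pre-generator. Writing $d$ for its closure, we have $B\in\core\subset\mathrm{dom}(d)$ with $d(B)=\Derchanic(B)=0$. The standard theory of strongly continuous one-parameter groups on Banach spaces then guarantees that $t\mapsto\alchanict(B)$ is norm-differentiable on all of $\R$ with
\begin{equation*}
\frac{d}{dt}\,\alchanict(B) \;=\; \alchanict\bigl(d(B)\bigr) \;=\; \alchanict(0) \;=\; 0,
\end{equation*}
so the orbit is constant. Evaluating at $t=0$ yields $\alchanict(B)=B$ for every $t\in\R$.

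The proposition is essentially an algebraic tautology built on top of Proposition \ref{prop:AHPmain}, so there is no genuine obstacle. The only point that warrants a moment's care is that the SUSY identity for $\Derchanic$ is only asserted on the local algebra $\core$, not on the full domain of the closure $d$; but the hypothesis places $B$ precisely in $\core$, so the computation of $\Derchanic(B)$ goes through without invoking any closure or extension argument.
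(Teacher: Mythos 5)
Your proof is correct and follows essentially the same route as the paper: apply the SUSY identity \eqref{eq:derivationSUSY} to get $\Derchanic(B)=0$, then invoke Proposition \ref{prop:AHPmain} to conclude the orbit is constant. The extra detail you supply about differentiating $t\mapsto\alchanict(B)$ via the closure of the pre-generator is a correct elaboration of the step the paper leaves implicit.
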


 \begin{proof}
From the identity  \eqref{eq:derivationSUSY} 
 the assumption \eqref{eq:Bsuper-const} yields 
$\Derchanic(B)=0$.
As   $\Derchanic$ is a pre-generator for 
 the  strongly continuous one parameter
group of $\ast$-automorphism $\alchanict$ ($t\in \R$)
on  $\Al$ by  Proposition \ref{prop:AHPmain},  
 the equation  \eqref{eq:Btime-const} follows. 
\end{proof}

The following statement given in \cite{AHP16}  
 is also useful.
\begin{proposition}
\label{prop:SUSYground}
If a state  $\vp$  on $\Al$ is a supersymmetric state for $\delchanic$, 
 then $\vp$ is  a ground state
 (in the sense of Definition 5.3.18 of \cite{BRII})
 for  the one-parameter   
group of $\ast$-automorphisms $\alchanict$ ($t\in \R$).
 In particular, $\vp$ is invariant under $\alchanict$ ($t\in \R$).
\end{proposition}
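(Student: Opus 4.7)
The plan is to reduce everything to the identity
\begin{equation}\label{eq:plan-key}
\vp(A^{*}\Derchanic(A)) = \vp\bigl(\delchanic(A)^{*}\delchanic(A)\bigr) + \vp\bigl(\delchanicast(A)^{*}\delchanicast(A)\bigr) \quad (A\in\core),
\end{equation}
the $\cstar$-algebraic counterpart of the familiar SUSY formula $\langle\psi,H\psi\rangle=\|Q\psi\|^{2}+\|Q^{*}\psi\|^{2}$. Positivity of $\vp$ makes the right-hand side of \eqref{eq:plan-key} non-negative, which is precisely the Bratteli--Robinson ground state condition (Definition 5.3.18 of \cite{BRII}) for $\alchanict$; the $\alchanict$-invariance of $\vp$ drops out along the way.

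First I would promote SUSY invariance $\vp\circ\delchanic=0$ to $\vp\circ\delchanicast=0$. Writing the superderivations out as graded commutators with $\Qnic$ and $\Qnicast$ and computing directly yields the parity-dependent adjoint identities $\delchanic(A^{*})^{*} = -\delchanicast(A)$ for $A\in\coree$ and $\delchanic(A^{*})^{*} = +\delchanicast(A)$ for $A\in\coreo$. Taking complex conjugate of $\vp(\delchanic(A^{*}))=0$ and using $\overline{\vp(X)}=\vp(X^{*})$ gives $\vp(\delchanic(A^{*})^{*})=0$, hence $\vp\circ\delchanicast=0$ on each parity sector and so on $\core$. Combined with \eqref{eq:derivationSUSY} this forces $\vp\circ\Derchanic=0$ on $\core$; since $\Derchanic$ is a pre-generator of $\alchanict$ by Proposition \ref{prop:AHPmain} and $\core$ is a core for the closed generator, the $\alchanict$-invariance of $\vp$ follows by norm-continuity of $\vp$.

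For the key identity \eqref{eq:plan-key}, I decompose $A=A_{+}+A_{-}$ with $A_{+}\in\coree$ and $A_{-}\in\coreo$, and compute each of the four bilinears $\vp(A_{\epsilon}^{*}\Derchanic(A_{\epsilon'}))$. Applying the graded Leibniz rule \eqref{eq:gleib} to $\delchanicast(A_{\epsilon}^{*}\delchanic(A_{\epsilon'}))$ and to $\delchanic(A_{\epsilon}^{*}\delchanicast(A_{\epsilon'}))$, and then killing the total-superderivative terms via $\vp\circ\delchanic=\vp\circ\delchanicast=0$ established above, reduces each bilinear (after the parity-dependent adjoint identities are invoked on the surviving factor) to $\vp(\delchanic(A_{\epsilon})^{*}\delchanic(A_{\epsilon'}))+\vp(\delchanicast(A_{\epsilon})^{*}\delchanicast(A_{\epsilon'}))$. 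Summing over $\epsilon,\epsilon'\in\{+,-\}$ reassembles the four pieces into \eqref{eq:plan-key}.

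The main obstacle is sign bookkeeping. The $*$-conjugation interchanges $\delchanic$ and $\delchanicast$ with a parity-dependent sign, and the graded Leibniz rule contributes a further $\gamma(A^{*})$ factor whose sign flips between the even and odd sectors. The content of the calculation is that these two sources of signs align so that the four parity-bilinear pieces combine into positive squares rather than interfering destructively---the concrete imprint of supersymmetry at the $\cstar$-algebraic level. Once the signs are tracked correctly the argument is purely algebraic.
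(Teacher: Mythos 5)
The paper does not prove this proposition at all --- it is quoted as a known result from \cite{AHP16} --- so there is no internal proof to compare against; I can only assess your argument on its own terms, and it is correct. Your key identity $\vp(A^{*}\Derchanic(A))=\vp(\delchanic(A)^{*}\delchanic(A))+\vp(\delchanicast(A)^{*}\delchanicast(A))$ is the standard $\cstar$-algebraic form of $\langle\psi,H\psi\rangle=\|Q\psi\|^{2}+\|Q^{*}\psi\|^{2}$, and I have checked the sign bookkeeping: the parity-dependent adjoint relations ($\delchanicast(A^{*})=-\delchanic(A)^{*}$ on $\coree$, $\delchanicast(A^{*})=+\delchanic(A)^{*}$ on $\coreo$) do combine with the $\gamma(A^{*})$ factor from the graded Leibniz rule so that all four parity bilinears, including the cross terms, assemble into the two positive squares. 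The promotion of $\vp\circ\delchanic=0$ to $\vp\circ\delchanicast=0$ via complex conjugation is also sound, and the passage from $\vp\circ\Derchanic=0$ on the core $\core$ to $\alchanict$-invariance is the standard core-plus-continuity argument. Two small points worth making explicit if you write this up: (i) the Bratteli--Robinson condition is stated as $-i\vp(A^{*}\delta(A))\ge 0$ for the generator $\delta$ of $\alpha_t$, which is $i\,\Derchanic$ in the usual convention (the paper's Proposition~\ref{prop:AHPmain} suppresses this factor of $i$), so your inequality $\vp(A^{*}\Derchanic(A))\ge 0$ is indeed the right one; and (ii) the condition need only be verified on a core for the closed generator, which $\core$ is, so checking it on $\core$ suffices.
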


\section{Local fermionic  constants  of  motion}
\label{sec:INFINITE-CONSTANT}
The purpose of this section is to systematically provide  infinitely many   local fermionic constants  of motion for 
the Nicolai  model.
\subsection{Classical sequences   that encode local fermionic  
constants  of  motion}
\label{subsec:LOC-forb-permit-seq}
In this subsection 
 we introduce the sequences 
 that encode local fermionic constants of motion.
\begin{definition} 
\label{defn:forbid-permit}
Let  $\I$ denote  an  interval  of $\Z$, i.e. 
$\I=[m, n]=\{m, m+1, \cdots, n-1, n\}$ with $m,n\in \Z$ ($m<n$).
Let $f$ be a $\{-1, +1\}$-valued function  on  $\I$.
If either     
\begin{align}
\label{eq:2i-forbid}
f(2i-1)=-1,\ \ f(2i)=+1,\ \  f(2i+1)=-1,  
\end{align}
or 
\begin{align}
\label{eq:2i-forbid-ura}
f(2i-1)=+1,\ \ f(2i)=-1,\ \ f(2i+1)=+1,
\end{align}
 holds  for some  
$\{2i-1,\; 2i,\; 2i+1\}\subset \I$  ($i\in\Z$),  
then  $f$ is said to be  forbidden.
Otherwise, if $f$ does not include such  subsequences 
  \eqref{eq:2i-forbid} \eqref{eq:2i-forbid-ura} anywhere  in  $\I$,
 then it is said to be permitted.
 The  set of all $\{-1, +1\}$-valued permitted sequences   on  $\I$ 
 is  denoted by  ${\Xi}_{\I}$.
\end{definition}

We will frequently use 
 the  intervals whose  edges are both  even:
\begin{align} 
\label{eq:Ikl}
\Ikl\equiv[2k, 2l],
 \quad    k,l\in\Z\quad  \text{such that}\quad k<l.
\end{align}
By definition  $\Ikl$ 
 has  $2(l-k)+1(\ge 3)$ sites.
 We  intend to  find  fermion operators  on  $\Ikl$ that are invariant 
 under the time evolution of  the infinite system $\Al$.
For this sake  we introduce  a subclass of permitted sequences  on $\Ikl$ 
 imposing some additional requirement upon the  edges.

\begin{definition}
\label{defn:seq-cons}
Let $f$ be a $\{-1, +1\}$-valued permitted sequence  on the  interval  $\Ikl$
  with  $k,l\in\Z$  $(k<l)$ of  \eqref{eq:Ikl}, 
namely  $f\in \Xi_{\Ikl}$ as in  Definition \ref{defn:forbid-permit}.
Assume  that   
$f$   takes  a constant  on the  left-end  pair sites   $\{2k,\; 2k+1\}$,  
and  that  $f$  takes  a constant  
on the right-end  pair sites $\{2l-1,\; 2l\}$. Namely  
\begin{align}
\label{eq:leftedge-seq}
f(2k)=f(2k+1)=+1\quad  \text{or}\quad 
f(2k)=f(2k+1)=-1
\end{align}
and 
\begin{align}
\label{eq:rightedge-seq}
f(2l-1)=f(2l)=+1\quad  \text{or}\quad 
f(2l-1)=f(2l)=-1.
\end{align}
The set of all  
$\{-1, +1\}$-valued permitted sequences
on $\Ikl$ satisfying the above  marginal conditions 
 on both edges is  denoted by  $\hat{\Xi}_{k,l}$.
The union of   $\hat{\Xi}_{k,l}$ over all  
$k,l\in\Z$ ($k<l$) is  denoted by $\hat{\Xi}$.
Each $f\in\hat{\Xi}$  is called a   local  sequence of conservation  for the Nicolai model.
\end{definition}

\begin{remark} 
\label{rem:marginal}
The  requirements 
\eqref{eq:leftedge-seq}
\eqref{eq:rightedge-seq} on the edges of $\Ikl$  are   essential to 
 make  conservations  for the Nicolai model on the infinite lattice $\Z$.
 \end{remark}

\begin{remark} 
\label{rem:crude}
By  crude estimate we  see that  the number of local 
sequences of  conservation in  
$\hat{\Xi}_{k,l}$ is roughly $(\frac{2^{3}-2}{2})^{(l-k)}=3^{(l-k)}=3^{m/2}$, 
 where  $m=2(l-k)$  denotes approximately the  size of the system 
(i.e.  the number of sites in  $\Ikl$).
\end{remark}

It is convenient to  consider   the following  special elements 
 of $\hat{\Xi}$. 
\begin{definition}
\label{defn:ID-conserved}
For each 
 $k,l\in\Z$ ($k<l$) let 
 $\IDplkl\in \hat{\Xi}_{k,l}$ and
 $\IDmikl\in \hat{\Xi}_{k,l}$ 
denote  the constants 
over   $\Ikl$ taking   $+1$
 and $-1$, respectively:
\begin{equation}
\label{eq:constkl}
\IDplkl(i)=+1 \  \forall i\in \Ikl,\quad    
\IDmikl(i)=-1 \  \forall i\in \Ikl. 
\end{equation}
\end{definition}

\subsection{Construction of local fermionic  constants of motion}
\label{subsec:LCS}
We shall  give  a rule to assign a 
  local fermion operator  for every    
local  sequence   of conservation in      
  $\hat{\Xi}$ of   Definition \ref{defn:seq-cons}.
\begin{definition}
\label{defn:local-assignment}
For each $i\in\Z$
  let $\zetai$ denote the assignment   
from $\{-1,+1\}$ into  the fermion annihilation-creation operators  at $i$ as  
\begin{align}
\label{eq:zeta}
\zetai(-1):= \ai,\quad \zetai(+1):= \aicr.
\end{align}
 Take  any  pair of integers $k,l\in\Z$ such that $k<l$.
For  each  $f\in \hat{\Xi}_{k,l}$, set 
\begin{align}
\label{eq:Qseq}
\Qseq(f)&:=\prod_{i=2k}^{2l} \zetai\left(f(i)\right)\nonumber\\
&\equiv \zeta_{2k} 
\left(f(2k)\right) \zeta_{2k+1}\left(f(2k+1)\right) 
\cdots 
\cdots \zeta_{2l-1}\left(f(2l-1)\right) 
\zeta_{2l}\left(f(2l)\right)\in 
{\Al}(\Ikl)_{-},
\end{align}
 where the  multiplication 
  is  taken in the  increasing  order as above. 
The  formulas \eqref{eq:Qseq} for  all   $k,l\in\Z$ ($k<l$)
   yield   a unique   assignment $\Qseq$  from $\hat{\Xi}$ into $\coreo$.  
\end{definition}

By   Definition \ref{defn:local-assignment},
  for $k,l\in\Z$ ($k<l$)
\begin{align}
\label{eq:Qseqconst}
\Qseq(\IDplkl)&:=\atwokcr \atwokpcr \cdots 
\atwolmcr \atwolcr \in 
{\Al}(\Ikl)_{-}, \nonumber\\
\Qseq(\IDmikl)&:=\atwok \atwokp \cdots 
\atwolm \atwol \in 
{\Al}(\Ikl)_{-}.
\end{align}

\bigskip
\bigskip
\noindent{\bf{Examples}}\\
We will give concrete examples  for 
 local  sequences of conservation 
   of   Definition \ref{defn:seq-cons} and 
  their associated   local fermion operators 
   of   Definition \ref{defn:local-assignment}.
First we  see that  $\hat{\Xi}_{0,1}$  on $\Izeroone\equiv[0,1,2]$
  consists of  two obvious one:

\begin{center}
\begin{tabular}{|c||ccc|}
\hline
${\hat{\Xi}_{0,1}}$ &$0$ &$1$ &$2$  \\ \hline
$\IDmizeroone$ &$-1$ &$-1$ &$-1$   \\
\hline
$\IDplzeroone$ &$+1$ &$+1$ &$+1$  \\
 \hline\end{tabular}
\end{center}
By  \eqref{eq:Qseq}  of Definition \ref{defn:local-assignment}
 the corresponding local fermion operators are  
\begin{align}
\label{eq:Xi0-1}
\Qseq(\IDmizeroone)&=a_{0} a_{1} a_{2}
\in {\Al}(\Izeroone)_{-},\nonumber\\
\Qseq(\IDplzeroone)&=a^{\ast}_{0} a^{\ast}_{1} a^{\ast}_{2}
\in {\Al}(\Izeroone)_{-}.
\end{align}

Next we consider the segment  $\Izerotwo\equiv[0,1,2,3,4]$ 
  by setting  $k=0$ and $l=2$. 
The space 
$\hat{\Xi}_{0,2}$ on $\Izerotwo$
 consists of the following five  sequences of conservation:
\begin{center}
\begin{tabular}{|c||cc|c|cc|}
\hline
${\hat{\Xi}_{0,2}}$ &$0$ &$1$ &$2$ &$3$ &$4$ \\ \hline
$\IDmizerotwo$ &$-1$ &$-1$ &$-1$ &$-1$ &$-1$  \\
\hline
$u_{[0,4]}^{\rm{i}}$  &$-1$ &$-1$ &$-1$ &$+1$ &$+1$  \\
$u_{[0,4]}^{\rm{ii}}$  &$-1$ &$-1$ &$+1$ &$+1$ &$+1$  \\
\hline
$v_{[0,4]}^{\rm{i}}$  &$+1$ &$+1$ &$+1$ &$-1$ &$-1$  \\
$v_{[0,4]}^{\rm{ii}}$  &$+1$ &$+1$ &$-1$ &$-1$ &$-1$  \\
\hline
$\IDplzerotwo$ &$+1$ &$+1$ &$+1$ &$+1$ &$+1$  \\
 \hline\end{tabular}
\end{center}
Note that 
\begin{align}
\label{eq:Xi0-2-minus}
\IDmizerotwo=-\IDplzerotwo,\ u_{[0,4]}^{\rm{i}}=-v_{[0,4]}^{\rm{i}},\ 
u_{[0,4]}^{\rm{ii}}=-v_{[0,4]}^{\rm{ii}}.
\end{align}
By \eqref{eq:Qseq}  of Definition \ref{defn:local-assignment}
we have 
\begin{align}
\label{eq:Xi0-2}
\Qseq(\IDmizerotwo)&=a_{0} a_{1} a_{2}a_{3} a_{4}
\in {\Al}(\Izerotwo)_{-},\nonumber\\
\Qseq(u_{[0,4]}^{\rm{i}})&=a_{0} a_{1} a_{2}a^{\ast}_{3} a^{\ast}_{4}\in {\Al}
(\Izerotwo)_{-},\nonumber\\
\Qseq(u_{[0,4]}^{\rm{ii}})&=a_{0} a_{1} a^{\ast}_{2}a^{\ast}_{3} a^{\ast}_{4}
\in {\Al}(\Izerotwo)_{-},\nonumber\\
\Qseq(v_{[0,4]}^{\rm{i}})&=a^{\ast}_{0} a^{\ast}_{1} a^{\ast}_{2}a_{3} a_{4}
\in {\Al}(\Izerotwo)_{-},\nonumber\\
\Qseq(v_{[0,4]}^{\rm{ii}})&=a^{\ast}_{0} a^{\ast}_{1} a_{2}a_{3}a_{4}
\in {\Al}(\Izerotwo)_{-},\nonumber\\
\Qseq(\IDplzerotwo)&=a^{\ast}_{0} a^{\ast}_{1} a^{\ast}_{2}a^{\ast}_{3} a^{\ast}_{4}
\in {\Al}(\Izerotwo)_{-}.
\end{align}

We then consider    the segment  $\Izerothree\equiv[0,1,2,3,4,5,6]$ 
  taking   $k=0$ and $l=3$. 
By definition it consists  of  $5+4+4+5=18$  sequences of conservation:
\begin{center}
\begin{tabular}{|c||cc|ccc|cc|}
\hline
${\hat{\Xi}_{0,3}}$ &$0$ &$1$ &$2$ &$3$ &$4$ &$5$ &$6$ \\ \hline
$s_{[0,6]}^{\circ} $ &$-1$ &$-1$ &$-1$ &$-1$ &$-1$ &$-1$ &$-1$  \\
$s_{[0,6]}^{\rm{i}}$ &$-1$ &$-1$ &$-1$ &$+1$ &$+1$ &$-1$ &$-1$  \\
$s_{[0,6]}^{\rm{ii}}$ &$-1$ &$-1$ &$+1$ &$+1$ &$-1$ &$-1$ &$-1$  \\
$s_{[0,6]}^{\rm{iii}}$ &$-1$ &$-1$ &$-1$ &$+1$ &$-1$ &$-1$ &$-1$  \\
$s_{[0,6]}^{\rm{iv}}$ &$-1$ &$-1$ &$+1$ &$+1$ &$+1$ &$-1$ &$-1$  \\
\hline
$u_{[0,6]}^{\rm{i}}$  &$-1$ &$-1$ &$-1$ &$-1$ &$-1$ &$+1$ &$+1$  \\
$u_{[0,6]}^{\rm{ii}}$   &$-1$ &$-1$ &$-1$ &$-1$ &$+1$ &$+1$ &$+1$   \\
$u_{[0,6]}^{\rm{iii}}$   &$-1$ &$-1$ &$-1$ &$+1$ &$+1$ &$+1$ &$+1$   \\
$u_{[0,6]}^{\rm{iv}}$   &$-1$ &$-1$ &$+1$ &$+1$ &$+1$ &$+1$ &$+1$   \\
\hline
$v_{[0,6]}^{\rm{i}}$   &$+1$ &$+1$ &$+1$ &$+1$ &$+1$ &$-1$ &$-1$   \\
$v_{[0,6]}^{\rm{ii}}$   &$+1$ &$+1$ &$+1$ &$+1$ &$-1$ &$-1$ &$-1$   \\
$v_{[0,6]}^{\rm{iii}}$   &$+1$ &$+1$ &$+1$ &$-1$ &$-1$ &$-1$ &$-1$   \\
$v_{[0,6]}^{\rm{iv}}$   &$+1$ &$+1$ &$-1$ &$-1$ &$-1$ &$-1$ &$-1$  \\
\hline
$t_{[0,6]}^{\bullet}$ &$+1$ &$+1$ &$+1$ &$+1$ &$+1$ &$+1$ &$+1$  \\
$t_{[0,6]}^{\rm{i}}$ &$+1$ &$+1$ &$+1$ &$-1$ &$-1$ &$+1$ &$+1$  \\
$t_{[0,6]}^{\rm{ii}}$ &$+1$ &$+1$ &$-1$ &$-1$ &$+1$ &$+1$ &$+1$  \\
$t_{[0,6]}^{\rm{iii}}$ &$+1$ &$+1$ &$+1$ &$-1$ &$+1$ &$+1$ &$+1$  \\
$t_{[0,6]}^{\rm{iv}}$ &$+1$ &$+1$ &$-1$ &$-1$ &$-1$ &$+1$ &$+1$  \\
 \hline\end{tabular}
\end{center}
Note that 
$s_{[0,6]}^{\circ}\equiv \IDmizerothree$ and 
$t_{[0,6]}^{\bullet}\equiv \IDplzerothree$
and that 
\begin{align}
\label{eq:Xi0-3-minus}
s_{[0,6]}^{\circ}=-t_{[0,6]}^{\bullet},\ 
s_{[0,6]}^{k}=-t_{[0,6]}^{k},\forall k\in \{\rm{i}, \rm{ii}, \rm{iii}, \rm{iv}\} \nonumber\\
u_{[0,6]}^{k}=-v_{[0,6]}^{k}, 
\forall k\in \{\rm{i}, \rm{ii}, \rm{iii}, \rm{iv}\}.
\end{align}
According to the rule 
we have  18 fermion operators associated to
 ${\hat{\Xi}_{0,3}}$.

\bigskip
\bigskip

We note some  properties  of the local fermion operators
  of  Definition \ref{defn:local-assignment}.
\begin{lemma}
\label{lem:Qseq-ast}
For every  $f\in\hat{\Xi}_{k,l}$ with  $k,l\in\Z$ ($k<l$),
 $-f$ also belongs to  $\hat{\Xi}_{k,l}$.
For each    
 $f\in\hat{\Xi}_{k,l}$
\begin{align}
\label{eq:Qseqast}
{\Qseq(f)}^{\ast}=(-1)^{\sigma(k,l)}\Qseq(-f),
\end{align}
where 
\begin{align}
\label{eq:skl}
\sigma(k,l)\equiv \left\{2(l-k)+1 \right\}(l-k).
\end{align}
In particular, 
\begin{align}
\label{eq:Qseqconstast}
{\Qseq(\IDmikl)}^{\ast}=(-1)^{\sigma(k,l)}\Qseq(\IDplkl).
\end{align}
\end{lemma}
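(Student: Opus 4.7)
The plan is to verify both parts directly from the definitions. For the first assertion, I would note that the two forbidden sub-patterns in Definition~\ref{defn:forbid-permit}, namely $(-1,+1,-1)$ and $(+1,-1,+1)$, are interchanged under the involution $f\mapsto -f$, so the set of permitted sequences is closed under negation. The edge-constancy conditions \eqref{eq:leftedge-seq} and \eqref{eq:rightedge-seq} are also preserved: if $f$ is identically $+1$ (resp.\ $-1$) on a pair of adjacent edge sites, then $-f$ is identically $-1$ (resp.\ $+1$) there. Hence $-f\in\hat{\Xi}_{k,l}$ whenever $f\in\hat{\Xi}_{k,l}$.

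For the second assertion, I would start from the pointwise identity $\zeta_i(\epsilon)^{\ast}=\zeta_i(-\epsilon)$ for $\epsilon\in\{-1,+1\}$, which is immediate from Definition~\ref{defn:local-assignment} and $a_i^{\ast\ast}=a_i$. Taking the adjoint of \eqref{eq:Qseq} reverses the order of the $N:=2(l-k)+1$ factors and replaces each $\zeta_i(f(i))$ by $\zeta_i(-f(i))$, producing $\zeta_{2l}(-f(2l))\,\zeta_{2l-1}(-f(2l-1))\cdots \zeta_{2k}(-f(2k))$. Since all factors live at distinct sites, the CARs \eqref{eq:CAR} imply that every pair of them anticommutes — independently of whether each factor is a creation or an annihilation operator — so restoring the product to increasing index order costs the sign $-1$ per adjacent transposition, and reversing a string of length $N$ needs exactly $\binom{N}{2}=N(N-1)/2$ such swaps. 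Substituting $N=2(l-k)+1$ gives $N(N-1)/2=[2(l-k)+1](l-k)=\sigma(k,l)$, which is \eqref{eq:skl}. This proves \eqref{eq:Qseqast}, and \eqref{eq:Qseqconstast} is then obtained by specializing to $f=\IDmikl$, noting that $-\IDmikl=\IDplkl$.

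Neither step presents a genuine obstacle; the result is essentially a bookkeeping lemma about the CAR algebra. The one point that needs to be stated explicitly, rather than merely computed, is that the sign from the reversal depends \emph{only} on the length $N$ and not on the sequence $f$: this uniformity is exactly what the CARs supply at distinct sites, where creation--creation, annihilation--annihilation, and mixed pairs all anticommute. Once this is noted, the combinatorial identity $\binom{N}{2}=[2(l-k)+1](l-k)$ closes the argument.
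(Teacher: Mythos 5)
Your argument is correct and is exactly the direct verification from the definitions that the paper itself dismisses with the single word ``Obvious''; in particular the sign count $\binom{N}{2}=\{2(l-k)+1\}(l-k)$ for reversing the $N=2(l-k)+1$ anticommuting factors at distinct sites is the right bookkeeping. Your explicit remark that the reversal sign depends only on the length $N$, because creation and annihilation operators at distinct sites all anticommute, is the one point worth recording, and you have it.
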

\begin{proof}
Obvious.
\end{proof}

\begin{proposition}
\label{prop:ALGQseq}
For every  
 $f\in\hat{\Xi}$,  the local fermion operator $\Qseq(f)\in \coreo$ 
and its adjoint $\Qseq(f)^{\ast}\in\coreo$ are nilpotent{\rm{:}}
\begin{align}
\label{eq:Qseq-nil}
\Qseq(f)^{2}=0=\Qseq(f)^{\ast\,2}.
\end{align}
For each 
 $f, g\in\hat{\Xi}$ 
\begin{equation}
\label{eq:anticommute}
\{\Qseq(f), \;\Qseq(g) \}=0
\end{equation}
is satisfied unless 
the support of $f$ and the support   of $g$ have 
a non-empty intersection  on which 
   $f=-g$ holds. 
\end{proposition}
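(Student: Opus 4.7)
The plan is to reduce everything to the canonical anticommutation relations, exploiting the observation that $\Qseq(f)$ is a monomial in the $a_i, a_i^{\ast}$ with exactly one factor $\zeta_i(f(i))$ at each site $i\in\Ikl$, and so is a product of $2(l-k)+1$ (odd) fermion operators living in $\coreo$.

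For the nilpotency I would expand $\Qseq(f)^{2}$ as the product of two identical monomials and migrate the first copy of $\zeta_{2k}(f(2k))$ of the second factor leftward using the CARs; since all the intermediate operators act at sites different from $2k$, each transposition contributes only a sign, and after the rearrangement the two copies of $\zeta_{2k}(f(2k))$ become adjacent, whereupon $a_{2k}^{2}=0=(a_{2k}^{\ast})^{2}$ forces the whole product to vanish. The companion identity $\Qseq(f)^{\ast\,2}=0$ follows either by taking the adjoint or by rerunning the same argument on $\Qseq(-f)$ via Lemma \ref{lem:Qseq-ast}.

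For the anticommutation claim I would case-split on how the defining intervals of $f$ and $g$ meet. When the supports are disjoint, every factor of $\Qseq(f)$ anticommutes with every factor of $\Qseq(g)$, so $\Qseq(f)\Qseq(g)=-\Qseq(g)\Qseq(f)$ because both monomials have odd length, and the anticommutator vanishes. When the supports overlap and some site $i$ in the intersection satisfies $f(i)=g(i)$, I would use the CARs to bring the two copies of $\zeta_i(f(i))$ adjacent inside $\Qseq(f)\Qseq(g)$ and separately inside $\Qseq(g)\Qseq(f)$; each product collapses on its own because $\zeta_i(f(i))^{2}=0$, so their sum is zero. The only configuration not handled by one of these two mechanisms is precisely the one in which the supports meet and $f=-g$ throughout the intersection, exactly the exception listed in the statement.

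I do not expect any genuine obstacle here: the argument is driven entirely by the CARs together with the one-operator-per-site structure of $\Qseq(f)$. The only care required is sign bookkeeping during the permutations, but since each positive conclusion is reached either by a parity count (disjoint supports, both of odd length) or by a repeated-site cancellation, the accumulated signs do not influence whether the resulting expression vanishes.
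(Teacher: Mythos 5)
Your proposal is correct and follows essentially the same route as the paper: the paper's proof is a one-line appeal to the monomial, one-operator-per-site form of $\Qseq(f)$ together with the CAR identities $\ai\ai=\aicr\aicr=0$, and your argument is simply that computation written out in detail (transposition signs for disjoint sites, collapse via a repeated site otherwise). No gap; the case analysis for the anticommutator exactly exhausts the complement of the stated exception.
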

\begin{proof}
The statements    can be  verified  by noting 
 the   form of $f\in \hat{\Xi}$ of    
 Definition \ref{defn:seq-cons} and 
  the  form of   $\Qseq(f)$  of  
Definition \ref{defn:local-assignment} 
together with  some  obvious  identities of fermion operators:  $\ai \ai =\aicr \aicr=0$,  
$\aicr\ai=n_i $ and $\ai \aicr=1-n_i$   
for   $i\in\Z$.
\end{proof}

 Let us   define   algebras  
 generated  by these  local fermion operators.
\begin{definition}
\label{defn:CA}
 Let $\CA$ denote the $\ast$-subalgebra in $\core$
  finitely  generated by 
 $\{\Qseq(f)\in \core|\; f\in \hat{\Xi} \}$.
 For every    $k,l\in\Z$ such that $k<l$
the $\ast$-subalgebra  
generated by 
 $\{\Qseq(f)\in \core|\; 
f\in \hat{\Xi}_{k^{\prime},\; l^{\prime}},  k\le k^{\prime} 
<l^{\prime}\le  l  \}$
 is  denoted by  $\CA({k,l})$.
 (By definition   $\CA({k,l})\subset {\Al}(\Ikl)$, 
 and $\CA({k,l})\supset \CA({p,q})$ if  
  $k\le p < q \le  l$.)
 \end{definition}
We are in a position to state our main result with the above definition.
\begin{theorem}
\label{thm:NICconstant}
Let $\alchanict$ ($t\in \R$) denote the 
time evolution for the Nicolai model  given  in $\S$\ref{subsec:NIC}.
Then for every  $B\in\CA$
 \begin{equation}
\label{eq:Btime-const-NIC}
\alchanict(B)=B  \quad {\text{for all}}\ t\in \R.
 \end{equation}
In particular,  for every 
 $f\in\hat{\Xi}$,
 \begin{equation}
\label{eq:Btime-const-NICparticular}
\alchanict(\Qseq(f))=\Qseq(f),\quad
\alchanict(\Qseq(f)^{\ast})=\Qseq(f)^{\ast} 
  \quad {\text{for all}}\ t\in \R.
 \end{equation}
\end{theorem}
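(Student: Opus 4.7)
The plan is to reduce, via Propositions \ref{prop:AHPmain} and \ref{prop:GENtimeinvariant}, to showing that both superderivations annihilate every generator, i.e.\ $\delchanic(\Qseq(f)) = 0 = \delchanicast(\Qseq(f))$ for each $f \in \hat{\Xi}$. Since $\delchanic$ and $\delchanicast$ are linear and satisfy the graded Leibniz rule \eqref{eq:gleib}, annihilation on generators propagates to the whole $\ast$-algebra $\CA$; adjoints need no separate treatment, since by Lemma \ref{lem:Qseq-ast} we have $\Qseq(f)^{\ast} = \pm\Qseq(-f)$ with $-f \in \hat{\Xi}$.

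Fix $f \in \hat{\Xi}_{k,l}$ and expand
\[
\delchanic(\Qseq(f)) = \sum_{i \in \Z} [\chanic(\{2i-1, 2i, 2i+1\}), \Qseq(f)]_{\gamma}.
\]
By the $\gamma$-locality \eqref{eq:glocality}, only the finitely many $i$ whose triple intersects $[2k, 2l]$ contribute: the interior range $k+1 \le i \le l-1$ together with the two boundary indices $i = k$ and $i = l$. For each such $i$, split $\Qseq(f) = A_L A_M A_R$, where $A_M$ collects the factors $\zeta_j(f(j))$ on the sites of $\{2i-1,2i,2i+1\} \cap [2k,2l]$ and $A_L$, $A_R$ sit on disjoint sites to the left and right. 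Commuting $A_L$ and $A_R$ past the odd element $\chanic(\{2i-1,2i,2i+1\})$ via $\gamma$-locality and using that $\Qseq(f)$ has odd total parity, the fermionic signs combine to yield
\[
[\chanic(\{2i-1, 2i, 2i+1\}), \Qseq(f)]_{\gamma} = \pm\, A_L\, \{\chanic(\{2i-1, 2i, 2i+1\}),\, A_M\}\, A_R,
\]
and the same reduction with $\chanic^{\ast}$ in place of $\chanic$ applies to $\delchanicast$. The problem is thus reduced to a finite collection of local anticommutator checks.

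For an interior $i$, $A_M$ runs over the eight products $\zeta_{2i-1}(\pm)\zeta_{2i}(\pm)\zeta_{2i+1}(\pm)$. A direct CAR computation (exploiting $a_j^2 = (a_j^{\ast})^2 = 0$) shows that $\{\chanic, A_M\} = 0$ in every case except the pattern $(+,-,+)$, where $A_M = \chanic^{\ast}$ and the anticommutator reduces to $\{\chanic, \chanic^{\ast}\} \ne 0$; symmetrically $\{\chanic^{\ast}, A_M\} = 0$ except at $(-,+,-)$, where $A_M = -\chanic$. These are exactly the forbidden triples \eqref{eq:2i-forbid-ura} and \eqref{eq:2i-forbid}, so a permitted $f$ kills both derivations at every interior $i$. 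At $i = k$ only the sites $\{2k, 2k+1\}$ of the triple lie in $[2k,2l]$ and $A_M$ is a two-site product; an analogous short case check shows the nonuniform patterns $(+,-)$ and $(-,+)$ respectively violate $\delchanicast$ and $\delchanic$, while the uniform patterns $(+,+)$ and $(-,-)$ kill both. Hence the marginal condition \eqref{eq:leftedge-seq} is exactly what is needed, and symmetrically \eqref{eq:rightedge-seq} handles $i = l$.

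Combining these local vanishings gives $\delchanic(\Qseq(f)) = 0 = \delchanicast(\Qseq(f))$, and Proposition \ref{prop:GENtimeinvariant} then delivers \eqref{eq:Btime-const-NICparticular}; extending by linearity and the graded Leibniz rule yields \eqref{eq:Btime-const-NIC} for every $B \in \CA$. The only genuine obstacle is fermionic bookkeeping: tracking the signs when decomposing $\Qseq(f) = A_L A_M A_R$, and organizing the eight-case interior and four-case edge CAR computations cleanly. The definitions of the forbidden subsequences and the marginal edge conditions in Definitions \ref{defn:forbid-permit} and \ref{defn:seq-cons} have been engineered so that precisely these finitely many local anticommutators vanish.
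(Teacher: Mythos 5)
Your proposal is correct and follows essentially the same route as the paper: reduce via Proposition \ref{prop:GENtimeinvariant} and the graded Leibniz rule to showing that $\delchanic$ and $\delchanicast$ annihilate each generator $\Qseq(f)$, and then verify this by finitely many local CAR computations that are exactly what the forbidden-triplet and marginal-edge conditions of Definitions \ref{defn:forbid-permit} and \ref{defn:seq-cons} were designed to handle. The only difference is that the paper records the stronger one-sided identities $\Qseq(f)\chanic(\{2i-1,2i,2i+1\})=0=\chanic(\{2i-1,2i,2i+1\})\Qseq(f)$ (and likewise for $\chanic^{\ast}$), which renders your sign bookkeeping moot and also quietly fixes the one small imprecision in your write-up: at the boundary indices $i=k,l$ the overlap factor $A_M$ is even, so the graded bracket localizes to the commutator $[\chanic,A_M]$ rather than the anticommutator — harmless here, since under the marginal conditions both one-sided products vanish separately.
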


\begin{proof}
By  Proposition \ref{prop:GENtimeinvariant}
 it suffices to show that 
  \begin{equation}
\label{eq:nic-inv}
\delchanic(B)=0, \quad \delchanicast(B)=0
\ \ \text{for every}\ B\in \CA.
 \end{equation}
Furthermore, by 
   Definition \ref{defn:CA} and 
the graded Leibniz rule of superderivations \eqref{eq:gleib}
 we need   to  show that 
 \begin{equation}
\label{eq:nic-inv-del}
\delchanic\left(\Qseq(f)\right)=0, \quad \delchanicast\left(\Qseq(f)\right)=0
\ \ \text{for every}\  f\in \hat{\Xi}.
\end{equation}
From 
 Definitions \ref{defn:forbid-permit}, \ref{defn:seq-cons} and  \ref{defn:local-assignment}, 
  by  using the CARs,  
  we see that for  all   $i \in \Z$
\begin{align}
\label{eq:vanishQseq}
\Qseq(f)\chanic(\{2i-1, 2i, 2i+1\})=0=\chanic(\{2i-1, 2i, 2i+1\}) \Qseq(f),\nonumber\\
 \Qseq(f)\chanicast(\{2i-1, 2i, 2i+1\})=0=\chanicast(\{2i-1, 2i, 2i+1\})\Qseq(f).
\end{align}
These yield \eqref{eq:nic-inv-del}.
\end{proof}

 Theorem \ref{thm:NICconstant} can be expressed  in a more heuristic manner:
\begin{equation}
\label{eq:easyQ}
\{\Qnic,\; \Qseq(f) \}=0=\{\Qnicast \;  \Qseq(f)\} 
\ \ \text{for every}\ f\in\hat{\Xi},
\end{equation}
and 
\begin{equation}
\label{eq:easyQfconst}
[\Hnic, \;  \Qseq(f)]=0 \ \ \text{for every}\ f\in\hat{\Xi}.
\end{equation}

We  shall  provide  some  terminologies relevant to Theorem \ref{thm:NICconstant}.
\begin{definition}
\label{defn:const-motion}
For each local  sequence of conservation  
 $f\in\hat{\Xi}$,  
$\Qseq(f)$ is called the  local fermionic  constant of 
   motion  associated  to $f$.
 The pair of nilpotent local fermion operators $\{\Qseq(f), \Qseq(f)^{\ast}\}$ 
 is called the hidden  fermion charge associated to $f$.
 The $\ast$-algebra  $\CA$  in   Definition \ref{defn:CA}
is called the algebra of the constants of motion for the Nicolai model, 
and  $\CA({k,l})$  with  $k,l\in\Z$ ($k<l$)  is 
called the algebra of the constants of motion 
  within the segment $\Ikl$.
\end{definition}

\begin{remark} 
\label{rem:gauge}
The subalgebras $\CA$ and 
  $\CA({k,l})$ for any  $k,l\in\Z$ ($k<l$)
 include   many   observables (self-adjoint operators).
 Some of them  are $U(1)$-gauge invariant observables.
 \end{remark}

\begin{remark} 
 Padmanabhan et al. studied  non-ergodic dynamics of supersymmetric fermion lattice models in  \cite{PADMAN}.
 Our   method and the   
local constants of motion given  here   
  are different from those in  \cite{PADMAN}.
 \end{remark}

\section{Highly degenerated classical supersymmetric ground  states}
\label{sec:C-G-States}
In this section we  shall provide  all classical supersymmetric ground   
states of  the Nicolai  model. 
 We will not discuss  general  supersymmetric ground 
 states.%%

\subsection{Classical configurations}
\label{subsec:classical}
 Let   
$\ket{0}_{i}$ and  $\ket{1}_{i}$  
 denote  the  empty-state vector and 
the occupied-state vector 
 of the  spinless fermion  at  $i\in \Z$, respectively.
Thus 
\begin{equation}
\label{eq:keti}
\ai \ket{1}_{i}= \ket{0}_{i},\ 
\aicr \ket{1}_{i}=0,\ 
\aicr \ket{0}_{i}= \ket{1}_{i},\ 
\ai \ket{0}_{i}=0.
\end{equation}
With  $\{\ket{0}_{i},\ \ket{1}_{i};\; i\in \Z\}$
 we generate the Fock space. 

\begin{definition}
\label{defn:CLASSIC}
Let $g(n)$ denote an arbitrary   
$\{0, 1\}$-valued  function  over  $\Z$. 
It is called  a classical configuration over $\Z$.
For any classical configuration $g(n)$ define 
\begin{align}
\label{eq:gn-vector}
\ket{g(n)_{n\in\Z}}:= \cdots \otimes 
\ket{g(i-1)}_{i-1} \otimes  \ket{g(i)}_{i}
\otimes \ket{g(i+1)}_{i+1} \otimes   \cdots
\end{align}
This infinite product vector determines  
a  state $\psi_{g(n)}$ on the  fermion system  $\Al$. 
 It  will be  called  the  classical state   associated 
to the configuration  $g(n)$ over $\Z$.
\end{definition}

 To each  classical configuration over $\Z$
 we   assign  an operator by the following rule. 
\begin{definition}
\label{defn:OPERA}
For each $i\in\Z$
  let $\opkappai$ denote the map 
from $\{0, 1\}$ into  $\Ali$ given as  
\begin{align}
\label{eq:OPERAi-config}
\opkappai(0):= \id, \quad \opkappai(1):= \aicr.
\end{align}
For each  classical configuration  $g(n)$ over $\Z$ 
 define   
the   infinite-product of fermion field operators: 
\begin{align}
\label{eq:defOPERAg}
\OPERA(g)&:=\prod_{i\in\Z} \opkappai\left(g(i)\right)=
\cdots \opkappaiminus\left(g(i-1)\right) \opkappai\left(g(i)\right) 
\opkappaiplus\left(g(i+1)\right)
\cdots, 
\end{align}
where the  multiplication  is  taken  in the  increasing  order. 
If  $g(n)$ has  a  compact support, then   
\begin{align}
\label{eq:OPERAgcompact}
\OPERA(g)\in \core.
\end{align}
Otherwise  $\OPERA(g)$
 denotes  a  formal operator which is  out of  $\Al$. 
\end{definition}

One can naturally  relate    Definition \ref{defn:CLASSIC} (product vectors)
 and   Definition \ref{defn:OPERA} (product  operators) 
via   the Fock representation.
\begin{proposition}
\label{prop:corresonFock}
Let  $\eta_0$ denote the  Fock  vector (no-particle wave function).
For any classical configuration  $g(n)$ over $\Z$, 
 the following  identity   holds{\rm{:}} 
\begin{align}
\label{eq:gonFock}
\ket{g(n)_{n\in\Z}}=\OPERA(g)\eta_0.
\end{align}
\end{proposition}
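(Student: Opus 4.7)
The plan is to verify this via a site-by-site reduction to the elementary identity
\[
\opkappai(g(i))\ket{0}_i = \ket{g(i)}_i,
\]
which follows immediately from the definition of $\opkappai$ together with the action rules \eqref{eq:keti}: either $g(i)=0$ and $\opkappai(0)=\id$ leaves $\ket{0}_i$ fixed, or $g(i)=1$ and $\opkappai(1)=\aicr$ sends $\ket{0}_i$ to $\ket{1}_i$.

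First I would treat the case of a compactly-supported classical configuration $g$, where $\OPERA(g)\in\core$ is a genuine element of the CAR algebra equal to the ordered product $a^{\ast}_{i_1}\,a^{\ast}_{i_2}\cdots a^{\ast}_{i_k}$ with $S(g):=\{i\in\Z\;|\;g(i)=1\}=\{i_1<i_2<\cdots<i_k\}$. Under the canonical identification of the Fock representation --- in which $\eta_0$ corresponds to the product vector $\bigotimes_{i\in\Z}\ket{0}_i$ and ordered creation operators produce the standard occupation-number basis --- the action $\OPERA(g)\eta_0$ gives the tensor-product vector with $\ket{1}_i$ at sites in $S(g)$ and $\ket{0}_i$ elsewhere, which is precisely $\ket{g(n)_{n\in\Z}}$. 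This can also be seen by a short induction on $|S(g)|$ using the single-site identity above.

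For a general configuration $g$ without compact support, both sides are to be interpreted in an infinite tensor product Hilbert space (in the sense of von Neumann) containing all the product vectors $\ket{g(n)_{n\in\Z}}$, with $\eta_0$ identified with the distinguished product $\bigotimes_i\ket{0}_i$ and the formal operator $\OPERA(g)$ interpreted through the same infinite-tensor-product structure. Writing $g_N$ for the restriction of $g$ to $[-N,N]$ (zero elsewhere), the compact-support step gives $\OPERA(g_N)\eta_0=\ket{g_N(n)_{n\in\Z}}$; taking $N\to\infty$ site by site --- each factor stabilizes as soon as $N$ exceeds $|i|$ --- yields the claimed identity.

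The main subtlety lies in specifying the Hilbert space housing $\OPERA(g)$ and $\ket{g(n)_{n\in\Z}}$ when $g$ has infinitely many occupied sites, since the Fock space generated from $\eta_0$ by finitely many creation operators only contains finite-particle vectors; once one commits to the infinite-tensor-product framework (or equivalently a GNS-type setting for the relevant product state), the verification is bookkeeping. Crucially, fermionic signs do not intervene, because $\OPERA(g)$ is built solely from $\id$ and $\aicr$ factors placed in a fixed increasing order, so no anticommutation rearrangement is ever required.
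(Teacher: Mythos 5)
Your argument is correct and is essentially the paper's proof, which simply observes that the identity follows directly from Definitions \ref{defn:CLASSIC} and \ref{defn:OPERA} together with the action rules \eqref{eq:keti}; your site-by-site reduction and the remark that no anticommutation rearrangement is needed just make that one-line argument explicit. The only extra content in your write-up, the interpretation for non-compactly-supported $g$ via the infinite tensor product, is handled by the paper not in the proof but in Remark \ref{rem:noncompactg}, in the same spirit as what you describe.
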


\begin{proof}
This  directly  follows from  Definition \ref{defn:CLASSIC}
 and Definition \ref{defn:OPERA} by noting \eqref{eq:keti}.
\end{proof}

\begin{remark} 
\label{rem:noncompactg}
Even when 
 $g(n)$  does not have a compact support, 
 the identity  \eqref{eq:gonFock}
 of Proposition \ref{prop:corresonFock} is valid.
 For example,  take  the constant 
  $\iota(n):=1$  $\forall n\in \Z$ for the  classical 
configuration over $\Z$.
Obviously   the  support  of $\iota$ is non compact. 
Nevertheless,  we have 
\begin{align*}
&\ \ \OPERA(\iota)\eta_0 \nonumber\\
&=\cdots a^{\ast}_{-2}
 \ket{0}_{-2}
\otimes a^{\ast}_{-1} \ket{0}_{-1} \otimes a^{\ast}_{0} \ket{0}_{0}
\otimes a^{\ast}_{1} \ket{0}_{1} \otimes a^{\ast}_{2} \ket{0}_{2}
\otimes   \cdots\nonumber\\
&=\cdots 
\otimes \ket{1}_{-2}
\otimes \ket{1}_{-1} \otimes  \ket{1}_{0}
\otimes \ket{1}_{1} \otimes  \ket{1}_{2}
\otimes   \cdots\equiv \eta_1,
\end{align*}
where $\eta_1$ denotes  
the fully occupied wave function over $\Z$.
\end{remark}

\subsection{Classical supersymmetric ground states}
\label{subsec:classical}
We introduce  the following special class of classical configurations.
 \begin{definition}
\label{defn:ground-config}
 Take  any  three-site  subset  $\{2i-1, 2i, 2i+1\}$ centered 
 at an even site $2i$ ($i\in\Z$).
   Among $8$ configurations  ($\{0,1\}$-valued functions)
on  $\{2i-1, 2i, 2i+1\}$,  
   $``0, 1, 0"$ and  $``1, 0, 1"$  are   called   forbidden triplets.
If  a classical configuration $g(n)$ $(n\in\Z)$ 
 does not include such  forbidden triplets  over $\Z$, 
then it is called a ground-state  configuration for  the Nicolai model 
 over $\Z$.
The set of all ground-state  configurations for  the Nicolai model over $\Z$
 is  denoted by  $\Upsilon$.
The set of all ground-state  configurations for  the Nicolai model
 whose support is included in some finite region 
 is denoted by  $\Upsilon_{\circ}$.
The set of all ground-state  configurations for  the Nicolai model
 whose support is included in 
a finite region $\I\Subset\Z$ is denoted by   $\Upsilon_{\I}$.
\end{definition}

We can  classify all  classical supersymmetric ground states 
 by using  Definition  \ref{defn:ground-config}.
\begin{theorem}
\label{thm:Allclassic}
A classical state  on the fermion lattice system  $\Al$
  is a supersymmetric ground state of  the Nicolai model if and only if 
its associated configuration  
 $g(n)$ over $\Z$ is a ground-state 
 configuration  as  in  Definition
   \ref{defn:ground-config} (i.e. $g(n)\in \Upsilon$).
Every  such  state 
 is  invariant under  
 the time evolution $\alchanict$ ($t\in \R$).  
\end{theorem}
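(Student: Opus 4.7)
The plan is to reduce the theorem to the equivalence ``$\psi_g$ is invariant under $\delchanic$ iff $g\in\Upsilon$'' and then invoke Proposition \ref{prop:SUSYground} for free. By that proposition, every state $\vp$ satisfying $\vp(\delchanic(A))=0$ for all $A\in\core$ is automatically a ground state and invariant under $\alchanict$, so the time-invariance clause of the theorem is immediate once the equivalence is established.

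For the if direction, fix $g\in\Upsilon$. The key observation is that each local supercharge $\chanic(\{2i-1,2i,2i+1\})=a_{2i+1}a_{2i}^{\ast}a_{2i-1}$ annihilates the vector $\ket{g(n)_{n\in\Z}}$ in the Fock representation: a non-zero action would require the occupation pattern ``$1,0,1$'' at $(2i-1,2i,2i+1)$, which is forbidden and therefore absent in $g$. Dually, $\chanic(\{2i-1,2i,2i+1\})^{\ast}\ket{g(n)_{n\in\Z}}=0$ (requiring ``$0,1,0$''), and taking adjoints also $\bra{g(n)_{n\in\Z}}\chanic(\{2i-1,2i,2i+1\})=0$. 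For any $A\in\core$, the formal sum $\delchanic(A)=\sum_i [\chanic(\{2i-1,2i,2i+1\}),A]_{\gamma}$ is in fact finite (only $i$'s whose support meets the support of $A$ survive, by $\gamma$-locality), and each surviving summand is a linear combination of $\bra{g(n)_{n\in\Z}}\chanic(\cdots)A\ket{g(n)_{n\in\Z}}$ and $\bra{g(n)_{n\in\Z}}A\chanic(\cdots)\ket{g(n)_{n\in\Z}}$, both of which vanish. Hence $\psi_g(\delchanic(A))=0$.

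For the only if direction, fix $i_0\in\Z$ and test the supersymmetry condition against the odd element $A=\chanic(\{2i_0-1,2i_0,2i_0+1\})^{\ast}\in\core$. Since $A$ is odd, $\delchanic(A)=\{\Qnic,A\}=\sum_j \{\chanic(\{2j-1,2j,2j+1\}),A\}$, and $\gamma$-locality kills every contribution with $|j-i_0|\ge 2$, leaving only $j=i_0-1,\,i_0,\,i_0+1$. A direct CAR computation shows
\[
\{\chanic(\{2i_0-1,2i_0,2i_0+1\}),\chanic(\{2i_0-1,2i_0,2i_0+1\})^{\ast}\}=(1-n_{2i_0-1})n_{2i_0}(1-n_{2i_0+1})+n_{2i_0-1}(1-n_{2i_0})n_{2i_0+1},
\]
which is exactly the projector onto the union of the two forbidden triplets at $(2i_0-1,2i_0,2i_0+1)$. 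The cross-anticommutators for $j=i_0\pm 1$ collapse, after similar CAR manipulation, to pure hopping monomials such as $a_{2i_0-1}^{\ast}a_{2i_0}a_{2i_0+3}a_{2i_0+2}^{\ast}$, in which the creation and annihilation sites form disjoint multisets, so the classical state $\psi_g$ annihilates them. Taking $\psi_g$ of the full sum therefore yields $(1-g(2i_0-1))g(2i_0)(1-g(2i_0+1))+g(2i_0-1)(1-g(2i_0))g(2i_0+1)$, which equals $1$ precisely when the triplet at $i_0$ is forbidden. Supersymmetry forces this quantity to vanish for every $i_0$, so $g\in\Upsilon$.

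The principal technical hurdle is the cross-anticommutator computation for $j=i_0\pm 1$: one has to verify that a cancellation between the $(1-n_{2i_0\pm 1})$ and $n_{2i_0\pm 1}$ contributions from the two orderings leaves a clean hopping term with no residual number-operator factor. Without this cancellation, the test operator $\chanic(\{2i_0-1,2i_0,2i_0+1\})^{\ast}$ would not isolate the single-site forbidden-triplet projector, and detecting a single forbidden triplet at $i_0$ would require a more elaborate local probe.
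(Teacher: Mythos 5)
Your proof is correct. The ``if'' direction and the identification of forbidden triplets with the nonvanishing action of the local charges $\chanic(\{2i-1,2i,2i+1\})$ on classical vectors coincide with the paper's argument; you in fact spell out a step the paper leaves implicit, namely passing from $\chanic(\{2i-1,2i,2i+1\})\ket{g(n)_{n\in\Z}}=0=\chanic(\{2i-1,2i,2i+1\})^{\ast}\ket{g(n)_{n\in\Z}}$ to the defining condition \eqref{eq:susyinv} via the finite-sum decomposition of $\psi_g(\delchanic(A))$. Where you genuinely diverge is the ``only if'' direction. The paper lifts the problem to the GNS/Fock space: it invokes the implementability of the supercharges from \cite{AHP16} to get $\Qnic\ket{g(n)_{n\in\Z}}=0=\Qnicast\ket{g(n)_{n\in\Z}}$, and then argues that the nonzero images $\chanic(\{2i-1,2i,2i+1\})\ket{g(n)_{n\in\Z}}$ for distinct $i$ are distinct Fock basis vectors, so no cancellation is possible and each summand must vanish separately. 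You instead remain at the level of the state and test $\delchanic$ against the single local probe $A=\chanic(\{2i_0-1,2i_0,2i_0+1\})^{\ast}$; your anticommutator identity is correct (I verified that the cross terms $j=i_0\pm1$ do collapse, after the $(1-n_{2i_0\pm1})+n_{2i_0\pm1}$ cancellation you flag, to off-diagonal hopping monomials annihilated by any classical product state), so the supersymmetry condition evaluated on this one element directly reads off the forbidden-triplet indicator at $i_0$. Your route buys two things: it uses only the invariance under $\delchanic$ (not the separate vanishing of $\Qnicast$ on the GNS vector) and it avoids relying on the operator-level existence of the supercharge quoted from \cite{AHP16}; the paper's route, in exchange, exhibits the no-cancellation structure of $\Qnic\ket{g(n)_{n\in\Z}}$ explicitly. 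Both arguments are complete.
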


\begin{proof}
First we shall see  the  action of the local fermion charges
 $\chanictwoisan\equiv a_{2i+1} a^{\ast}_{2i} a_{2i-1}$  ($i\in \Z$) defined  in  \eqref{eq:Qnic} 
  upon classical states 
(via  the GNS representation   for the Fock state).
If  the classical configuration $g(n)$ over $\Z$ satisfies 
 $g(2i-1)=1, g(2i)=0 ,g(2i+1)=1$, 
namely there  includes 
the  forbidden  $``1, 0, 1"$ on  $\{2i-1, 2i, 2i+1\}$,
 then 
\begin{align}
\label{eq:chanic-act-nonzero}
&\chanictwoisan
 \ket{g(n)_{n\in\Z}} \nonumber\\ 
&=\cdots \otimes \ket{g(2i-3)}_{2i-3} \otimes  \ket{g(2i-2)}_{2i-2} 
 \nonumber\\
&\ \otimes \ket{0}_{2i-1} \otimes  \ket{1}_{2i}
\otimes \ket{0}_{2i+1} \otimes   
\ket{g(2i+2)}_{2i+2} \otimes 
\cdots,
\end{align}
where any entry 
 on  the complement of  $\{2i-1, 2i, 2i+1\}$ in $\Z$ is  unchanged.
For any other $g(n)$  the  corresponding classical  
vector $\ket{g(n)_{n\in\Z}}$ is always deleted by 
 $\chanictwoisan$:
\begin{align}
\label{eq:chanic-act-zero}
\chanictwoisan
\ket{g(n)_{n\in\Z}}  
=0.
\end{align}
Similarly, consider the action of $\chanictwoisanast$.
If  the classical configuration $g(n)$ over $\Z$ satisfies 
 $g(2i-1)=0, g(2i)=1 ,g(2i+1)=0$, namely there  includes 
the  forbidden  $``0, 1, 0"$ on  $\{2i-1, 2i, 2i+1\}$, then 
\begin{align}
\label{eq:chanicast-act-nonzero}
&\chanictwoisanast
 \ket{g(n)_{n\in\Z}} \nonumber\\ 
&=\cdots \otimes \ket{g(2i-3)}_{2i-3} \otimes  \ket{g(2i-2)}_{2i-2} 
 \nonumber\\
&\ \otimes \ket{1}_{2i-1} \otimes  \ket{0}_{2i}
\otimes \ket{1}_{2i+1} \otimes   
\ket{g(2i+2)}_{2i+2} \otimes 
\cdots,
\end{align}
where 
any entry  on  the complement of  $\{2i-1, 2i, 2i+1\}$ in $\Z$
is  unchanged.
For any other $g(n)$  the  corresponding 
vector $\ket{g(n)_{n\in\Z}}$ is always deleted:
\begin{align}
\label{eq:chanicast-act-zero}
\chanictwoisanast
\ket{g(n)_{n\in\Z}}  
=0.
\end{align}
The above  relations  
\eqref{eq:chanic-act-zero}
\eqref{eq:chanicast-act-zero}
have established    that   all 
local fermion charges of the Nicolai model  
 delete   any  classical vector 
$\ket{g(n)_{n\in\Z}}$ if  
  $g(n)$ is a ground-state configuration. 
Thus  we have shown the if part of the statement.

We will   show the only if part of the statement.
Suppose  that 
 a classical supersymmetric state $\psi_{g(n)}$ is given, where
 $g(n)$ is its associated configuration over $\Z$.
By the assumption   both  
 $\Qnic \ket{g(n)_{n\in\Z}}=0$ 
 and  $\Qnicast \ket{g(n)_{n\in\Z}}=0$ hold.
(Note that the existence of these  supercharge   operators
 on the GNS Hilbert space  for any  supersymmetric state  
 is guaranteed \cite{AHP16}.)
From 
\eqref{eq:chanic-act-nonzero}
\eqref{eq:chanic-act-zero}
the  identity   $\Qnic \ket{g(n)_{n\in\Z}}=0$
 implies that $\chanictwoisan \ket{g(n)_{n\in\Z}}=0$ 
 for all    $i \in \Z$,
since there is no
cancellation   among  the actions of the local fermion charges 
 upon  $\ket{g(n)_{n\in\Z}}$. 
Similarly, from 
\eqref{eq:chanicast-act-nonzero}
\eqref{eq:chanicast-act-zero} the identity 
 $\Qnicast \ket{g(n)_{n\in\Z}}=0$ implies that 
 $\chanictwoisanast \ket{g(n)_{n\in\Z}}=0$   for all  $i \in \Z$.
These facts  imply  that $g(n)$ should  be  a ground-state configuration
 in which   no forbidden triplet is  included.

By Proposition  \ref{prop:SUSYground}, the invariance  
  under the time evolution is obvious.
\end{proof}

We now  provide  another remarkable characterization 
 of  the classical supersymmetric ground states of the Nicolai model.
 For this purpose   we  recall the  formula 
  $\Hnic=H_{\rm{classical}}+H_{\rm{hop}}$  \eqref{eq:Hnic-decom}, where 
$\Hnic$ is the total Hamiltonian  given explicitly in 
\eqref{eq:Hnicgutai}, $H_{\rm{classical}}$ is  the classical term given 
in \eqref{eq:Hnic-cla}, and  $H_{\rm{hop}}$ is the hopping term given in  
\eqref{eq:Hnic-hop}. We consider  a new  classical spin lattice model 
 determined by  $H_{\rm{classical}}$ (which is imbedded in the fermion lattice system).
\begin{theorem}
\label{thm:skelton}
The set of  all classical supersymmetric 
 ground  states for the Nicolai model over  $\Z$ is identical to the set of 
all ground states for the  classical spin model over $\Z$ corresponding 
to the  classical part   of the Nicolai model{\rm{:}}\\ 
$H_{\rm{classical}}
=\sum_{i\in \Z}n_{2i}-n_{2i-1}n_{2i}
-n_{2i}n_{2i+1} +n_{2i-1}n_{2i+1}$.
\end{theorem}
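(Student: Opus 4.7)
The plan is to recognise $H_{\rm{classical}}$ as a sum of mutually commuting, non-negative local terms whose classical energy vanishes exactly away from the forbidden triplets of Definition \ref{defn:ground-config}; the statement then follows by combining this observation with Theorem \ref{thm:Allclassic}.

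First I would rewrite
\begin{equation*}
H_{\rm{classical}}=\sum_{i\in\Z} h_i, \qquad h_i:= n_{2i}-n_{2i-1}n_{2i}-n_{2i}n_{2i+1}+n_{2i-1}n_{2i+1},
\end{equation*}
so that each $h_i$ is a polynomial in the mutually commuting number operators $n_{2i-1},n_{2i},n_{2i+1}$, supported on the three-site window $\{2i-1,2i,2i+1\}$ centred at the even site $2i$. Thus all $h_i$ are diagonal in the classical product basis generated by $\{\ket{0}_i,\ket{1}_i\}_{i\in\Z}$.

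Second, I would evaluate $h_i$ on each of the eight classical configurations $(g(2i-1),g(2i),g(2i+1))\in\{0,1\}^3$ by direct substitution. A short enumeration shows that $h_i=1$ exactly for the two triplets $(0,1,0)$ and $(1,0,1)$, and $h_i=0$ for the remaining six triplets. In particular $h_i\ge 0$ on every classical state $\psi_{g(n)}$, and $\psi_{g(n)}(h_i)=0$ if and only if the window centred at $2i$ is not a forbidden triplet in the sense of Definition \ref{defn:ground-config}. Summing over $i\in\Z$ therefore gives $\psi_{g(n)}(h_i)=0$ for every $i$ if and only if $g\in\Upsilon$.

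Finally, since the interaction is finite range and each local energy takes values in $\{0,1\}$ on classical configurations, the natural notion of classical ground state in infinite volume---local stability of $\psi_{g(n)}$ under arbitrary finite modifications of $g$---is equivalent to the joint vanishing of all $\psi_{g(n)}(h_i)$, i.e.\ to $g\in\Upsilon$. Combining this with Theorem \ref{thm:Allclassic}, which identifies the classical supersymmetric ground states of the Nicolai model as exactly the classical states with $g\in\Upsilon$, yields the claimed coincidence of the two sets. The only non-routine point I anticipate is fixing the precise meaning of ``ground state for the classical spin model over $\Z$''; once this is phrased as local energy minimisation the argument is an elementary enumeration, so I expect no real obstacle beyond bookkeeping.
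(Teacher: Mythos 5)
Your proposal is correct and follows essentially the same route as the paper: decompose $H_{\rm{classical}}$ into the commuting three-site terms $h_i$, enumerate the eight local configurations to find that $h_i$ equals $1$ precisely on the forbidden triplets $(0,1,0)$ and $(1,0,1)$ and $0$ otherwise, conclude that the minimum value $0$ is attained exactly on $g\in\Upsilon$, and invoke Theorem \ref{thm:Allclassic}. The paper's proof is this same enumeration argument, so no further comparison is needed.
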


\begin{proof}
Take any three-site  subset  $\{2k-1, 2k, 2k+1\}$ 
centered  at an even site $2k$ ($k\in\Z$).
 There are    eight ($=2^3$) classical   configurations  on   
$\{2k-1, 2k, 2k+1\}$.  The  local interaction   within  
 $\{2k-1, 2k, 2k+1\}$ 
is $m_{2k}:=n_{2k}-n_{2k-1}n_{2k}-n_{2k}n_{2k+1}+n_{2k-1}n_{2k+1}$.
 The operator $m_{2k}$  takes eigenvalue $+1$
  upon  the  two forbidden triplets  
  $``0, 1, 0"$  and  $``1, 0, 1"$ on $\{2k-1, 2k, 2k+1\}$, while it 
 takes $0$  on  the other six  classical  configurations on 
$\{2k-1, 2k, 2k+1\}$.
As  $H_{\rm{classical}}$ 
is the summation  of these positive operators  $m_{2k}$, it is positive.
  $H_{\rm{classical}}$  takes 
   $0$ on any  classical configuration that does 
 not include the   forbidden triplets  
  $``0, 1, 0"$   and  $``1, 0, 1"$ anywhere over $\Z$, 
 while  it takes  a strictly positive value on any other classical configuration.    
Thus  $H_{\rm{classical}}$ 
takes its minimum value $0$ only on the ground-state configurations  
defined  in  Definition \ref{defn:ground-config}.
Therefore by Theorem \ref{thm:Allclassic} we obtain  the equivalence  as  stated. 
\end{proof}

\begin{remark} 
\label{rem:KMN}
 The  highly degenerated 
classical ground states shown  in this section
 can be understood as symmetry breakdown 
of  hidden local fermion symmetries  given   in 
$\S$\ref{sec:INFINITE-CONSTANT}. 
 For the detail we refer  to \cite{KMN}.
\end{remark}

\section{Ergodicity breaking}
\label{sec:ERGODIC}
From the existence of local constants of motion  shown in 
 $\S$\ref{sec:INFINITE-CONSTANT} 
  we immediately see that the Nicolai model breaks ergodicity.
In this section we shall show  several non-ergodic properties  of the Nicolai model. 

In $\S$\ref{subsec:ERGODIC}   we  consider   the 
  ergodicity  due to  Mazur \cite{MAZUR} 
 which  is  given  in terms of  averaged temporal autocorrelation functions of 
 invariant states.  We  prove that the Nicolai model breaks ergodicity in this sense.
In $\S$\ref{subsec:ROECK}  we  see   that  
  delocalization of the Hamiltonian dynamics of the Nicolai model is suppressed 
 although it  has a non-trivial disorder-free
translation-invariant  interaction.
In $\S$\ref{subsec:INTEGRABILITY} we investigate  quantum integrability 
 that  the Nicolai model possesses
  based on the proposal by Caux-Mossel \cite{CAUX}.

\subsection{Ergodicity breaking in the sense of Mazur}
\label{subsec:ERGODIC}
We   recall the definition of ergodicity 
 due to Mazur \cite{MAZUR}  for  a general $\cstar$-dynamical system 
  as  stated in  \cite{SIR-VER73}.
Consider  a one-parameter group  of automorphisms $\alpha_t$ ($t\in\R$) 
 on a $\cstar$-algebra $\Al$. 
Assume that $\alpha_t$ ($t\in\R$) be  strongly continuous: 
\begin{equation}
\label{eq:strongalt}
\lim_{t \to 0}\Vert\alt(A)-A\Vert\ \to 0
\ \ {\mbox{for every}}\  A \in \Al.
\end{equation} 
Suppose that  a state $\ome$  on $\Al$ is $\alt$-invariant,
\begin{equation}
\label{eq:omealtinv}
\ome\left(\alpha_t(A)\right)=\ome (A) 
\ \ {\text{for all}}\ A\in \Al \ \ {\text{and}}\ t\in\R. 
\end{equation}
 The triplet $(\Al, \alpha_t, \ome)$ 
is   called a quantum dynamical system ($\cstar$-dynamical system).

By  $\bigl(\Hilome,\; \piome,\; \Omeome  \bigr)$
 we denote  the  GNS representation  associated to the state  $\ome$ of $\Al$.
 Precisely,   
$\piome$ is  a  homomorphism from $\Al$
 into $\Bl(\Hilome)$ (the set of all bounded linear 
 operators on the Hilbert space $\Hilome$), 
 and  $\Omeome\in\Hilome$ is a cyclic vector such that 
$\ome(A)=\left(\Omeome, \piome(A)\Omeome\right)$  for all $A\in \Al$.

 By  the  continuity  of  $\alpha_t$ with respect to $t\in\R$, 
   there exists  a strongly continuous  unitary   group 
 $\{\Utome;\; t\in\R \}$ that implements $\alpha_t$ ($t\in\R$)
on  the GNS Hilbert space $\Hilome$ as:  
\begin{align}
\label{eq:weakdynamics}
\Utome \left( \piome (A) \right)\Utome^{-1}&=\piome\left(\alpha_t(A)\right)
\ \ {\text{for all}}\ A\in \Al \ \ {\text{and}}\ t\in\R. 
%\ A\in \Al. 
\end{align}
By the Stone-von Neumann theorem \cite{RSI}, 
 there exists  a self-adjoint operator $\Home$ on $\Hilome$ such  that 
\begin{equation}
\label{eq:Utome}
\Utome=e^{it \Home} \ \ {\text{for all}} \ t\in \R,
\end{equation}
 and 
\begin{equation}
\label{eq:Hann}
\Home\Omeome=0.
\end{equation}
Let $F_{\ome}$ denote the orthogonal projection 
 on the $\Utome$-invariant vectors in $\Hilome$, i.e 
 the projection in $\Hilome$ with the range 
\begin{equation}
\label{eq:Frange}
\bigl\{ \psi\in \Hilome \; |\; \Utome\psi=\psi   \ \ {\text{for all}} \ t\in \R\bigr\}.
\end{equation}

With the above notations in hand, we shall introduce the notion of ergodicity.
 The following inequality holds 
for any $A\in \Al$ as shown in   \cite{SIR-VER73}.
\begin{align}
\label{eq:generalinequ}
\lim_{T\to\infty}\frac{1}{T} \int_{0}^{T} 
 \ome \left(A^{\ast}\alpha_t(A) \right)dt
 \ge \ome(A^{\ast})\ome(A).
\end{align}
 The operator $A$ is called ergodic
 if this becomes an  equality.   
Otherwise, $A$ is called a non-ergodic operator.
If every operator  of $\Al$ is   ergodic, then 
the quantum dynamical system   $(\Al, \alpha_t, \ome)$  
is called  ergodic. Otherwise
 $(\Al, \alpha_t, \ome)$ is called  non-ergodic.

Our precise statement  of  non-ergodicity of the Nicolai model is as follows.
\begin{theorem}
\label{thm:ERGO-break-NIC}
Let $\alchanict$ ($t\in \R$) denote the 
 time evolution 
 of  the Nicolai  model given  in $\S$\ref{subsec:NIC}.
 For any KMS state  with respect to $\alchanict$ ($t\in \R$)
at any positive temperature $\beta\in\R$, 
 the ergodicity in the sense of Mazur is broken.
 For any classical supersymmetric ground state  
 given in Theorem \ref{thm:Allclassic} the ergodicity is also broken.
\end{theorem}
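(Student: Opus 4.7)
The plan is to exhibit, for each state $\ome$ covered by the theorem, a local operator $B\in\Al$ for which the Mazur time-average strictly exceeds the square of the expectation: $\lim_{T\to\infty}\frac{1}{T}\int_{0}^{T}\ome(B^{\ast}\alchanict(B))\,dt > |\ome(B)|^{2}$. The construction of $B$ differs in the two cases, but both exploit the structural results of Sections~\ref{sec:INFINITE-CONSTANT} and~\ref{sec:C-G-States}.

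For a KMS state $\ome$ at finite $\beta$, I take a conserved nontrivial projection, e.g.\
\begin{equation*}
P := \Qseq(\IDplzeroone)^{\ast}\Qseq(\IDplzeroone) = (\unit-n_{0})(\unit-n_{1})(\unit-n_{2}) \in \CA\cap\Ale,
\end{equation*}
which lies in $\CA$ by Definition~\ref{defn:CA} and is $\alchanict$-invariant by Theorem~\ref{thm:NICconstant}. The autocorrelation is then constant in $t$, so the time-average equals $\ome(P^{2})=\ome(P)$. A KMS state at finite $\beta$ is faithful on the simple CAR algebra $\Al$ (its cyclic GNS vector is separating for $\piome(\Al)''$ via Tomita--Takesaki), and because $P\neq 0$ and $\unit-P\neq 0$ this forces $0<\ome(P)<1$. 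Hence $\ome(P)>\ome(P)^{2}=|\ome(P)|^{2}$, breaking Mazur ergodicity.

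For a classical SUSY ground state $\psi_{g}$ with $g\in\Upsilon$ (Theorem~\ref{thm:Allclassic}), $\psi_{g}$ is not faithful, so instead I work in the GNS picture and use an operator $B$ that need not be conserved. The crucial step is to produce another configuration $g'\in\Upsilon$ that agrees with $g$ outside a finite interval. Pick even integers $a<b$ and set $g'(n):=0$ on $[a,b]$, $g'(n):=g(n)$ elsewhere (use the value $1$ instead in the exceptional case where $g$ is identically $0$ on $[a,b]$). Direct inspection of the centered triplets $(g'(2i-1),g'(2i),g'(2i+1))$ shows that the interior ones are $(0,0,0)$ and the two boundary triplets at $2i=a$ and $2i=b$ have shape $(\ast,0,0)$ or $(0,0,\ast)$ --- never the forbidden $(0,1,0)$ or $(1,0,1)$; the $1$-replacement case is symmetric. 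Choosing $[a,b]$ so that $g$ is not already constant on it yields $g'\neq g$ in $\Upsilon$. Now set $B\in\core$ to be a fixed-ordering product of $a_{i}$ at the sites where $g(i)=1,g'(i)=0$ and $a_{i}^{\ast}$ at the sites where $g(i)=0,g'(i)=1$; in the Fock-like GNS representation of $\psi_{g}$ one has $\piome(B)\Omeome=\pm\ket{g'}$, so $\psi_{g}(B)=0$. Since $g'\in\Upsilon$, the computation in the proof of Theorem~\ref{thm:Allclassic} (relations~\eqref{eq:chanic-act-zero} and~\eqref{eq:chanicast-act-zero}) yields $\Qnic\ket{g'}=\Qnicast\ket{g'}=0$, whence $\Hnic\ket{g'}=0$ and $\Utome(t)\ket{g'}=\ket{g'}$ for every $t$. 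Therefore
\begin{equation*}
\lim_{T\to\infty}\frac{1}{T}\int_{0}^{T}\psi_{g}\bigl(B^{\ast}\alchanict(B)\bigr)\,dt = \bigl(\piome(B)\Omeome,\,F_{\ome}\,\piome(B)\Omeome\bigr) = 1 > 0 = |\psi_{g}(B)|^{2},
\end{equation*}
so ergodicity fails for $\psi_{g}$ as well.

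The main obstacle I anticipate is the combinatorial verification that the block-replacement construction produces an element of $\Upsilon$ distinct from $g$ \emph{uniformly} in $g\in\Upsilon$, including highly constrained periodic configurations (such as the dimerized pattern $\ldots,0,0,1,1,0,0,1,1,\ldots$) where short-range modifications always violate the ground-state condition. The remaining ingredients --- Mazur's inequality in GNS form, faithfulness of finite-$\beta$ KMS states on the simple CAR algebra, and the spectral computation via the projection $F_{\ome}$ onto $\Utome(t)$-invariant vectors --- are standard.
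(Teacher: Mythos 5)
Your proof is correct, and it reaches the conclusion by a somewhat more direct route than the paper. The paper first invokes the Sirugue--Verbeure criterion (ergodicity of $(\Al,\alchanict,\ome)$ is equivalent to the invariant-vector projection $F_{\ome}$ being one-dimensional) and then exhibits a second $\Utome$-invariant vector: for KMS states it takes $\piome(B)\Omeome$ for an arbitrary non-scalar invariant $B\in\CA$, using that the GNS vector of a KMS state is cyclic and separating; for classical ground states it simply points to the infinitely many degenerate ground-state vectors living in the same GNS space. You instead verify the strict Mazur inequality by hand with concrete witnesses: the conserved projection $P=\Qseq(\IDplzeroone)^{\ast}\Qseq(\IDplzeroone)=(\unit-n_0)(\unit-n_1)(\unit-n_2)$ together with $0<\ome(P)<1$ from faithfulness in the KMS case, and a local ``flip'' operator $B$ carrying $\ket{g}$ to a distinct ground-state vector $\ket{g'}$ in the classical case. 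The two arguments are equivalent in substance (your ground-state computation is exactly the mean-ergodic-theorem identity underlying the criterion), but yours is more self-contained and quantitative, at the price of being tied to specific operators where the paper's KMS argument works for any non-scalar element of $\CA$. One remark: the obstacle you flag at the end is not actually present. Replacing $g$ by the constant $0$ (or, if $g$ already vanishes there, the constant $1$) on an interval $[a,b]$ with even endpoints never creates a forbidden triplet, since every triplet meeting $[a,b]$ then has at least two adjacent equal entries inside $[a,b]$, whereas the forbidden patterns $(0,1,0)$ and $(1,0,1)$ have no two adjacent equal entries; this works uniformly for every $g\in\Upsilon$, including the dimerized pattern you mention. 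So the combinatorial step closes without further effort.
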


\begin{proof}
In Theorem 2  of  \cite{SIR-VER73}   the 
 criterion of ergodicity is established as follows:\ 
{\it{The quantum dynamical system  $(\Al, \alpha_t, \ome)$
 is ergodic if and only if  
$F_{\ome}$ is a one-dimensional projection.}}

First we consider  KMS states.
 For the precise definition  of KMS states, 
 we refer to  \cite{BRI,BRII}.
Let $\ome$ denote a KMS with respect to 
 $\alchanict$ ($t\in \R$) (whose existence has been known).  
Take  any  $\alchanict$-invariant element  $B\in\CA$.
Assume that  it  is not a scalar. 
Actually there are many non-scalar elements in  $\CA$ 
 by   Definition \ref{defn:CA}.
 As any  KMS state is known to be a faithful state, 
 $\ome(B^\ast B)$ is strictly positive, and 
hence $\piome (B)\Omeome  \ne 0$.
 We   normalize $B$ so that  $\Vert \piome (B)\Omeome\Vert=1$.
 Let us denote this new   normalized vector $\piome (B)\Omeome$ by  $\Omeome^{B}$. 
 It is known that the GNS vector $\Omeome$ of any KMS state  
is a  separating vector. Thus  
 $\Omeome^{B}$ and $\Omeome$ are different rays that give rise to different  states, 
 namely  $\Omeome^{B}\ne \Omeome$ up to $U(1)$-phase. 
As $\alchanict(B) =B$ and $\Utome^{-1} \Omeome=\Omeome$ 
for all $t\in \R$, we see  
\begin{align*}
\Utome\Omeome^{B} =\Utome\piome(B)\Omeome =
\Utome\piome(B)\Utome^{-1} \Omeome\\
=\piome(\alchanict(B)) \Omeome
=\piome(B) \Omeome=\Omeome^{B}.
 \end{align*}
This tells that  $\Omeome^{B}$ is in   the range of $F_{\ome}$.
 Hence the range of  $F_{\ome}$ has more than one-dimension.
 Thus  $(\Al, \alchanict, \ome)$ is   non-ergodic.

Second we consider the case of classical supersymmetric ground  states.
Let $\ome$ denote any  such  state given in  
Theorem \ref{thm:Allclassic}. 
By Theorem \ref{thm:Allclassic} 
 there are many other ground states  which are  identical to $\ome$ 
 except on some finite region. 
Namely there are infinitely degenerated  ground states in the same Hilbert space 
 $\Hilome$.
Therefore the range of $F_{\ome}$ has more than one-dimension (in fact infinite dimension).
 We conclude that  $(\Al, \alchanict, \ome)$ is   non-ergodic.   
\end{proof}

\begin{remark} 
\label{rem:various}
%%% In   operator algebraic formalism 
There are some  different  definitions of ergodicity  
   in addition to 
 the definition  \cite{SIR-VER73} which  we have chosen here. 
See e.g. \cite{BRII}  \cite{THIR4} for   mathematical  formalism based on  operator algebras
 and  \cite{POLK}  for more  physics oriented treatment. 
If a strong chaotic  property of dynamics  known as the  asymptotic abelian condition \cite{DKR66} 
is satisfied,  then a straightforward  quantum  generalization of  classical ergodic  theory  
 is possible  as noted in \cite{THIR4}. 
However, the  asymptotic abelian condition  remains  an unjustified hypothesis  \cite{NAR15}.
 In fact, it is violated for  the Nicolai model.
\end{remark}

\subsection{Failure  of  delocalization}
\label{subsec:ROECK}
In  \cite{Deloc} 
a  general scenario of delocalization for 
  disorder-free translation-invariant quantum Hamiltonians is proposed.
 We will apply   this  scenario  
 to  the Nicolai model for  some natural but restricted  case.
 To  this end   we  recall that  the Nicolai model has a  decomposition 
  $\Hnic=H_{\rm{classical}}+H_{\rm{hop}}$  \eqref{eq:Hnic-decom}, where 
$H_{\rm{classical}}$ is  the classical term given in \eqref{eq:Hnic-cla}, and  $H_{\rm{hop}}$ is the hopping term given in  \eqref{eq:Hnic-hop}. 

\begin{proposition}
\label{prop:NORESONANT}
Take  $H_{\rm{classical}}$ as  our  initial classical Hamiltonian.
Consider its perturbation by the  quantum interaction $H_{\rm{hop}}$.
 Then all   ground states of $H_{\rm{classical}}$ (that exist 
 infinitely many)  are invariant under any order  of the  perturbation 
 by  $\lambda H_{\rm{hop}}$ ($\lambda\in\R$).
\end{proposition}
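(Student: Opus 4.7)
The plan is to reduce the proposition to a single key identity, namely that $H_{\rm{hop}}$ annihilates every ground-state vector of $H_{\rm{classical}}$. Once this is in hand, a routine application of Rayleigh--Schr\"odinger perturbation theory makes every correction at every order vanish, so each classical ground state persists unchanged as a zero-energy eigenstate of $H_{\rm{classical}}+\lambda H_{\rm{hop}}$ for all $\lambda\in\R$.

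To obtain the crucial identity I would first invoke Theorem \ref{thm:skelton}, which identifies the ground states of $H_{\rm{classical}}$ with the classical supersymmetric ground states $\psi_g$ ($g\in\Upsilon$) of the full Nicolai model. By Theorem \ref{thm:Allclassic} each $\psi_g$ is supersymmetric, so in its GNS representation the supercharges $\Qnic$ and $\Qnicast$ both annihilate the cyclic vector $\Omega_g$. Combined with \eqref{eq:Hnicdef}, this forces $\Hnic\Omega_g=0$. Since $\psi_g$ is simultaneously a ground state of the classical spin model at ground energy $0$, we also have $H_{\rm{classical}}\Omega_g=0$, and the decomposition \eqref{eq:Hnic-decom} then yields
\begin{equation*}
H_{\rm{hop}}\Omega_g \;=\; \Hnic\Omega_g - H_{\rm{classical}}\Omega_g \;=\; 0.
\end{equation*}

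With this in hand, the perturbation expansion of the ground-state vector for $H(\lambda):=H_{\rm{classical}}+\lambda H_{\rm{hop}}$ starting from $\Omega_g$ simplifies dramatically: the first-order energy shift $\langle\Omega_g,H_{\rm{hop}}\Omega_g\rangle$ vanishes, the first-order wave-function correction $-(H_{\rm{classical}})^{-1}_{\perp}H_{\rm{hop}}\Omega_g$ vanishes, and by induction each higher-order contribution carries a factor $H_{\rm{hop}}\Omega_g=0$ on the right, so the series collapses to $\Omega_g$ itself at every order in $\lambda$.

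The main subtlety I expect concerns the severe degeneracy of the unperturbed ground-state subspace: because $\Upsilon$ is infinite, naive non-degenerate perturbation theory is not legitimate, and one would normally need to diagonalize $H_{\rm{hop}}$ on the degenerate ground-state subspace at first order before proceeding. However, the identity $H_{\rm{hop}}\Omega_g=0$ holds uniformly for \emph{every} $g\in\Upsilon$, so $H_{\rm{hop}}$ restricts to the zero operator on the entire ground-state subspace. This rules out any intra-subspace mixing at any order and makes the degenerate perturbation theory collapse trivially, confirming that no classical ground state receives corrections of any order in $\lambda$.
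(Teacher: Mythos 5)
Your proposal is correct and follows essentially the same route as the paper: both arguments hinge on the single identity $H_{\rm{hop}}\ket{g(n)_{n\in\Z}}=\Hnic\ket{g(n)_{n\in\Z}}-H_{\rm{classical}}\ket{g(n)_{n\in\Z}}=0$, obtained from Theorems \ref{thm:Allclassic} and \ref{thm:skelton} together with the decomposition \eqref{eq:Hnic-decom}, and then conclude that every order of the perturbation expansion vanishes because $(\lambda H_{\rm{hop}})^{k}$ annihilates the ground-state vector. Your added remark that the identity holds uniformly over the whole degenerate subspace, so that degenerate perturbation theory collapses trivially, is a useful clarification but not a departure from the paper's argument.
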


 \begin{proof}
By   Proposition \ref{prop:corresonFock},  
 Theorem \ref{thm:Allclassic} 
 and Theorem \ref{thm:skelton}, 
 any ground state of $H_{\rm{classical}}$ 
is represented  by a  vector 
$\ket{g(n)_{n\in\Z}}$ with some $g(n)\in \Upsilon$ defined in 
   Definition  \ref{defn:ground-config}.
 As it is a ground state for both
 $\Hnic$ and  $H_{\rm{classical}}$, the following identities hold: 
\begin{align}
\label{eq:actzero}
\Hnic \ket{g(n)_{n\in\Z}}=0=H_{\rm{classical}} \ket{g(n)_{n\in\Z}}.
\end{align}
From \eqref{eq:actzero} and $H_{\rm{hop}}=\Hnic-H_{\rm{classical}}$
 we have 
\begin{align}
\label{eq:classiact}
H_{\rm{hop}} \ket{g(n)_{n\in\Z}}=0.
\end{align}
This implies that  for any $k\in \NN$ and any $\lambda\in\R$
\begin{align}
\label{eq:classiact}
{(\lambda H_{\rm{hop}})}^k  \ket{g(n)_{n\in\Z}}=0.
\end{align}
So we obtain non existence of resonance as  
\begin{align}
\label{eq:noresonant}
\bra{\psi} {(\lambda H_{\rm{hop}})}^k  \ket{g(n)_{n\in\Z}}=0, 
\end{align}
where  $\psi$ is any state. 
\end{proof}

\begin{remark} 
\label{rem:notharm}
 The high  degeneracy of ground states 
  is   not  harmful for delocalization;  
 this   would  even  make   resonance happen  easier. 
The model  given in \cite{Deloc} is a generic 
interacting boson lattice model, whereas our model is a fermion lattice model. 
 As    the (spinless) fermion lattice model has much fewer 
 degrees of freedom at each site (only  up and down)
 than   boson  models,   more  resonant spots will happen. 
\end{remark}

\begin{remark} 
\label{rem:scenarioII}
Our statement  does not invalidate   the {\it{generic}} scenario  of delocalization 
 considered  in  \cite{Deloc}. 
The  formula  \eqref{eq:noresonant}  tells  no-resonant
  for the  particular  perturbation  upon  only classical ground  states. 
Resonant  may happen if we take  other 
  quantum hopping perturbations. 
\end{remark}

\subsection{On Quantum integrability}
\label{subsec:INTEGRABILITY}
We  shortly discuss   quantum integrability  for  the Nicolai model
 based  on  the  definition by  Caux-Mossel \cite{CAUX}.
Their  definition of {\it{quantum integrability}} 
  consists of four requirements which are  referred to as  
Requirements 1 to 4. Below we will check  them for the Nicolai model.

The set of nilpotent equations \eqref{eq:Qseq-nil} 
and the anti-commutation relations    
\eqref{eq:anticommute} given in  Proposition 
\ref{prop:ALGQseq} 
 will correspond  to the first half of Requirement 1.
 (The original definition  is designed  for  bosonic (usual) symmetries. 
Here  we  replace   the commutator 
 by the anti-commutator as   we deal with   fermionic  symmetries.) 
 By   \eqref{eq:easyQfconst}  in 
Theorem \ref{thm:NICconstant},   any of 
  $\{\Qseq(f)\in \core|\; f\in \hat{\Xi} \}$ commutes 
   with the total Hamiltonian $\Hnic$.  
This corresponds to  the second half of Requirement 1.
 So we have verified  Requirement 1 for the  set of local constants of motion 
 generated by the   local fermion operators  
$\{\Qseq(f)\in \core|\; f\in \hat{\Xi} \}$.

From Theorem \ref{thm:NICconstant} and 
 Remark \ref{rem:crude} one sees that 
  the number of  the set of local  constants of motion
 $\{\Qseq(f)\in \core|\; f\in \hat{\Xi} \}$
 increases   exponentially   with respect to the volume of  subsystems.
However, counting  independent operators  needs some care.
 In fact all the operators in 
  $\{\Qseq(f)\in \core|\; f\in \hat{\Xi} \}$ are not algebraically independent.
(By using  the CAR relations,  one can verify  that 
 the operators in  $\bigcup \{\Qseq(f) |\; f\in \hat{\Xi}_{0,\; 1}\bigcup \hat{\Xi}_{0,\; 2}  \}$
 are algebraically independent. However, the operators in  
$\bigcup \{\Qseq(f) |\; 
f\in \hat{\Xi}_{0,\; 1}\bigcup  \hat{\Xi}_{0,\; 2} \bigcup  \hat{\Xi}_{0,\; 3}   \}$  
are not algebraically independent.)
In any case,  we can see that  Requirement 2 is satisfied by  the similar  reason 
for  the free theories  as described in Sec.5 of \cite{CAUX}.

As  the cardinality of   $\{\Qseq(f)\in \core|\; f\in \hat{\Xi} \}$ 
 is unbounded,  Requirement 3 is satisfied.   

Requirement 4 is rather involved, 
so  we refer the readers 
 to the original paper \cite{CAUX}.
  We will only indicate  essential  points.
Any operator $\Qseq(f)\in \core$ for $f\in \hat{\Xi}$  
is a monomial 
 of finite fermion creation and  annihilation operators.
 Hence   it has the  constant character 
 of the preferred basis (the Fock-state  basis as in Definition \ref{defn:CLASSIC}).  
Hence   Requirement 4 is satisfied.  Our conclusion is now stated as follows:
\begin{proposition}
\label{prop:Caux}
The Nicolai model belongs to the 
 constant class of quantum  integrability 
 in the sense of   Caux-Mossel.
\end{proposition}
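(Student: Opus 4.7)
The plan is to verify each of the four Caux--Mossel requirements in turn, using the structural results of Section~\ref{sec:INFINITE-CONSTANT} as the principal inputs; essentially all the ingredients have already been assembled in the discussion preceding the proposition, and my task is to organize them into a coherent proof.

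First I would dispose of Requirements~1, 3, and 4. For Requirement~1, since the charges $\{\Qseq(f):f\in\hat{\Xi}\}$ are odd (fermionic), one adopts the natural fermionic adaptation that replaces the commutator between charges by the anti-commutator. Under this convention, the mutual compatibility of the charges is supplied by the nilpotency \eqref{eq:Qseq-nil} and the anti-commutation relation \eqref{eq:anticommute} of Proposition~\ref{prop:ALGQseq}, while the commutativity of each $\Qseq(f)$ with $\Hnic$ is just \eqref{eq:easyQfconst} of Theorem~\ref{thm:NICconstant}. Requirement~3 is immediate from the enumeration in Remark~\ref{rem:crude}, which gives $|\hat{\Xi}_{k,l}|\sim 3^{l-k}$, so the family of conserved charges is of unbounded cardinality as $l-k\to\infty$. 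For Requirement~4, I would invoke the monomial structure of $\Qseq(f)$ from Definition~\ref{defn:local-assignment}: each $\Qseq(f)$ is a finite product of operators $a_i$ and $a_j^{\ast}$, and hence on any Fock basis vector $\ket{g(n)_{n\in\Z}}$ of Definition~\ref{defn:CLASSIC} it either returns zero or, up to a sign, another such product vector. This is precisely the ``constant character'' property demanded in the preferred (Fock-state) basis, and is the decisive condition that singles out the constant class as opposed to the merely linear class of \cite{CAUX}.

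The main obstacle will be Requirement~2, the algebraic independence of the family. The operators $\{\Qseq(f)\}$ are certainly not all independent: as the parenthetical in the excerpt notes, relations already appear within $\hat{\Xi}_{0,1}\cup\hat{\Xi}_{0,2}\cup\hat{\Xi}_{0,3}$, arising from the CAR identities $a_ia_i=0$ and $a_i^{\ast}a_i+a_ia_i^{\ast}=\unit$. My plan is therefore to extract from $\{\Qseq(f):f\in\hat{\Xi}\}$ a subfamily of algebraically independent charges whose cardinality still grows exponentially with the length of the supporting interval $\Ikl$. A natural approach is to restrict attention to those $f\in\hat{\Xi}_{k,l}$ whose ``sign flips'' occur only in well-separated patches, so that the resulting monomials share only a controlled number of CAR generators and no CAR-based reduction can collapse them; the enumeration of such well-separated patterns is still exponential in $l-k$. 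Once an independent subfamily of exponential size is exhibited, the verification of Requirement~2 parallels the treatment of free-fermion theories in Section~5 of \cite{CAUX}, and assembling the four verifications yields the classification of the Nicolai model as belonging to the constant integrable class.
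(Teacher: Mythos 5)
Your proposal is correct and follows essentially the same route as the paper: it verifies the four Caux--Mossel requirements using exactly the inputs the paper uses, namely Proposition~\ref{prop:ALGQseq} and \eqref{eq:easyQfconst} of Theorem~\ref{thm:NICconstant} for Requirement~1, the exponential count of Remark~\ref{rem:crude} for Requirement~3, and the monomial (constant-character) structure of the $\Qseq(f)$ in the Fock basis for Requirement~4. The only divergence is at Requirement~2, where the paper, after conceding that the full family is not algebraically independent, simply defers to the free-theory discussion in Sec.~5 of \cite{CAUX}, whereas you sketch an explicit extraction of an exponentially large independent subfamily via well-separated sign-flip patches; that construction is plausible and somewhat more concrete than what the paper provides, but it is not carried out there and is not required at the paper's level of rigor.
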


\begin{remark} 
\label{rem:Caux}
 Proposition \ref{prop:Caux} is valid for 
  the    Nicolai model on any dimensional integer lattice,   
  see $\S$\ref{subsec:Multi}. 
\end{remark}

\begin{remark} 
\label{rem:CauxII}
 It is arguable that  
  the non-interacting  models  
 and   the Nicolai model    belong  to the same  class  
(the  constant class) of quantum integrability.
 Requirement 3   merely  requires    infinite  number of  independent local constants of motion.
 However,  the completeness  of  such   local constants of motion 
 is to  be taken into account for  more  precise  characterization, see
  \cite{RADE}  \cite{WEIGERT}
 \end{remark}

\section{Generalization}
\label{sec:gen}
We have studied the Nicolai  model 
  on one-dimensional lattice $\Z$. 
In this section  we shall  discuss  generalizations
 of  the  results   given  so far.
 We  discuss  generalization to 
  multi-dimensional lattice in $\S$\ref{subsec:Multi}. 
We consider the Nicolai model for finite systems 
in $\S$\ref{subsec:finite}.

\subsection{The Nicolai model on  multi-dimensional lattice}
\label{subsec:Multi}
We  will  indicate that 
  our results   given  so far  can be easily extended to
 the Nicolai model on $\Znu$ of  arbitrary $\nu\in\NN$.
 In the following we  discuss   $\Z^{2}$ since this  essentially 
 represents all the cases of  $\Znu$.

First we set up a   model for   $\Z^{2}$.
Let us define 
\begin{align}
\label{eq:Qnictwo}
\Qnic&:=
\sum_{(i,j) \in \Z^2}  
\chanictwoij,\\\nonumber  
\chanic(\twoij)&:= 
\atwoimtwoj \atwoitwojm \atwoitwojcr  
\atwoiptwoj \atwoitwojp \in \Altwoijo,
\end{align}
where 
\begin{align}
\label{eq:twoij}
\twoij:=\twoitwoj  \quad {\text{for}}\ i,j\in\Z. 
 \end{align}
 Namely the finite subset $\twoij\Subset\Z^{2}$ consists of five sites,  its  center  
$(2i,2j)\in (2\Z)^{2}$ and four sites next to  the center.  
It is easy to see that $\Qnic$ is  nilpotent, i.e. $\Qnic^2=0$.
From  the above $\Qnic$ we can construct a supersymmetric $\cstar$-dynamics 
 on  the  fermion lattice system  $\Al$ over $\Z^2$
 in much the same way as  given in $\S$\ref{subsec:NIC}.
 Proposition \ref{prop:AHPmain} holds  for the  two-dimensional Nicolai model.

As  in  $\S$\ref{sec:INFINITE-CONSTANT} 
 we can  construct  infinitely many local fermionic  constants  of  
 this new time evolution $\alchanict$ ($t\in \R$).
To this end  we  replace 
 Definition \ref{defn:forbid-permit} for $\Z$ by the  following one.
\begin{definition} 
\label{defn:newforbid-permit}
Let  $\I$ be  any   rectangle   of $\Z^{2}$.
Let $f$ be a $\{-1, +1\}$-valued function  on  $\I$.
If  on some $\twoij\Subset \I$,  either     
\begin{align}
\label{eq:forbidII}
&f((2i,2j))=+1,\nonumber \\
&f((2i-1,2j))=-1,\; f((2i,2j-1))=-1,\;    
f((2i+1,2j))=-1, \; f((2i,2j+1))=-1,  
\end{align}
or 
\begin{align}
\label{eq:forbidII-ura}
&f((2i,2j))=-1,\nonumber \\
&f((2i-1,2j))=+1,\; f((2i,2j-1))=+1,\;  
f((2i+1,2))=+1,\; f((2i,2j+1))=+1
\end{align}
is satisfied, then  $f$ is called   forbidden.
Otherwise, $f$  is called  permitted.
\end{definition}

With this new definition, we can immediately generalize 
 Definition \ref{defn:seq-cons} to the case of $\Z^{2}$ and  obtain 
local configurations  of conservation  for the Nicolai model on $\Z^{2}$.
Namely we can set up an  analogous rule  
given in $\S$\ref{subsec:LCS} for  $\Z^{2}$ and   
 provide  local fermionic operators  in the same way.
 We shall give  an example.
Let 
\begin{align} 
\label{eq:rectlm}
\Izerolm \equiv\{(x,y)\in\Z^{2};\; 
 0\le x  \le 2l,\  
0\le y  \le 2m\}\quad (l,m \in \NN).
\end{align}
As in Definition   \ref{defn:ID-conserved} take the 
 simplest local {{configurations}}:
\begin{equation}
\label{eq:constkl}
r^{+}_{[0,2l]\times [0,2m]}
(i)=+1 \  \forall i\in \Izerolm,\quad    
r^{-}_{[0,2l]\times [0,2m]}
(i)=-1 \  \forall i\in \Izerolm.
\end{equation}
The  assignment
 of local fermion operators 
 from local {{configurations}}   of conservation 
 will be denoted by the same symbol  $\Qseq$ as in 
  Definition \ref{defn:local-assignment}.
Then we have 
\begin{align}
\label{eq:constmultibox}
\Qseq(r^{+}_{[0,2l]\times [0,2m]})
=\prod_{i\in \Izerolm}\!\!\!\!\!\! \aicr  \in {\Al}(\Izerolm)_{-},\nonumber\\
\Qseq(r^{-}_{[0,2l]\times [0,2m]})
=\prod_{i\in \Izerolm}\!\!\!\!\!\! \ai \in {\Al}(\Izerolm)_{-},
\end{align}
where we specify certain   order of products.
 Repeating almost the same argument  as 
 in  $\S$\ref{sec:INFINITE-CONSTANT} we can show that
  both of them  are   
 invariant under the time evolution $\alchanict$ ($t\in \R$).

From the above generalization 
we can easily  derive  similar  results 
   shown in   $\S$\ref{sec:C-G-States}  $\S$\ref{sec:ERGODIC} 
 for the Nicolai model  on $\Z^{2}$.

 \subsection{Finite-volume  models}
\label{subsec:finite}
Let $m\in 2\NN$. 
We shall introduce the Nicolai model 
 on $\Al([-m-1,m])$ under the   periodic boundary condition specified below.
(Of course one may choose other boundary conditions.)
Define 
\begin{align}
\label{eq:Qfinite}
\Qnic[-m-1,m]:=
\sum_{i=-m/2}^{m/2}  
\chanic(\{2i-1, 2i, 2i+1\}),\quad 
\chanic(\{2i-1, 2i, 2i+1\})\equiv  a_{2i+1} a^{\ast}_{2i} a_{2i-1},
\end{align}
where the site $m+1$ is  identified with the site  $-m-1$.
We see that 
\begin{align}
\label{eq:Qgper-nil}
\Qnic[-m-1,m]^2=0=\Qnic[-m-1,m]^{\ast\;2}.
\end{align}
Thus the nilpotent condition is satisfied. 
The finite-volume  supersymmetric Hamiltonian 
is naturally generated by  the above  finite supercharges  as 
\begin{equation}
\label{eq:Hgper}
\Hnic[-m-1,m]:=\Bigl\{\Qnic[-m-1,m],\; \Qnic[-m-1,m]^{\ast} \Bigr\}.
\end{equation}

It is convenient to introduce the following 
intervals whose  edges are  odd  (c.f. \eqref{eq:Ikl}) 
\begin{align} 
\label{eq:Imtil}
\Imtil\equiv[-m-1, m+1]
 \quad \text{with identification}\    m+1\equiv -m-1.
\end{align}
 We take  ${\Xi}_{\Imtil}$  as  in Definition \ref{defn:forbid-permit}, i.e.  
the set of all $\{-1, +1\}$-valued permitted sequences on  $\Imtil$. 
Then as in Theorem \ref{thm:NICconstant} we obtain 
\begin{equation}
\bigl\{\Qnic[-m-1,m],\; \Qseq(f) \bigr\}=0=\left\{\Qnic[-m-1,m]^{\ast} \;  \Qseq(f)\right\} 
\ \ \text{for every}\ f\in {\Xi}_{\Imtil}.
\end{equation}
This implies 
\begin{equation}
\Bigl[\Hnic[-m-1,m], \;  \Qseq(f)\Bigr]=0 \ \ \text{for every}\ f\in {\Xi}_{\Imtil}.
\end{equation}
Namely every  $\Qseq(f)\in \Al([-m-1,m])$  with $f\in {\Xi}_{\Imtil}$ gives  a constant of motion
 for the Heisenberg time evolution generated by the finite-volume   Hamiltonian $\Hnic[-m-1,m]$.

We can provide all classical ground states for  the  Hamiltonian $\Hnic[-m-1,m]$
 as in  Theorem \ref{thm:Allclassic}
 in terms of   the  set of all ground-state  configurations
in $\Imtil$ given  in  Definition \ref{defn:ground-config}.

The finite-volume Nicolai model on $\Z^2$ can be given similarly.

\section{Summary and Discussion}
\label{sec:DIS}
 We have studied  dynamics of the Nicolai supersymmetric  fermion lattice model.
We have given explicitly  its  infinitely many 
  local fermionic constants of  motion.
The number of these local constants is extensive.
 As a consequence of them,  ergodicity breaking and certain  many-body localization 
 manifest.

All 
 classical supersymmetric ground states 
for the Nicolai model are determined in terms of  classical configurations (binary codes).
 We may interpret these   infinitely many   classical ground states on the Fock space 
as  a  many-body localization phenomena.  
Recently, 
 ergodicity  and its breakdown  have been discussed 
 in  the   subject  ``thermalization and  many-body  localization (MBL)''
 \cite{HUSErev} \cite{POLK}.
It is  believed  that many-body  localization is  essentially  caused by  strong 
  disorder,  and this belief has been verified 
 for some  models \cite{BASKO}  \cite{IMB}. 

We shall discuss the Nicolai model which  has a clean Hamiltonian with no disorder
 being  inspired by  the following   question posed  in 
  \cite{SCHIULAZ15} \cite{SMITHA-dfree}:
 Does MBL always necessitate  disorder? 

Usually  many-body localization (MBL)   requires 
   localization for  {\it{all}} eigenstates
 of interacting quantum models  \cite{OGANE-HUSE07}.
(See \cite{EMERGENT} where a weaker notion of  MBL 
  is proposed.) 
 On the other hand, we have shown  merely localization at  zero temperature for the Nicolai model.
Furthermore,  the Nicolai model  lacks a {{complete}}  set of ``l-bits'' (local integrals of motion)
 which seems to be  considered   essential for MBL  
  \cite{CHANDRAN}  \cite{PHENO} \cite{IMB-ROS-SCARD} 
   \cite{NANDURI-KIM} \cite{ROS15}  \cite{SERBYN13LCL}. 

 We now   provide    a   heuristic derivation of the lack of 
  a {{complete}}  set of l-bits for the Nicolai model.
 By Theorem \ref{thm:Allclassic}
 the Nicolai model has  infinitely many  classical ground states 
 which  are all  product states  with respect to the set $\{n_i;\; i\in\Z\}$.
As  a complete set of classical l-bits 
 is to  be 
 uniquely determined by  Hamiltonian eigenstates  \cite{HUSErev}, 
   $\{n_i;\; i\in\Z\}$  should  be a desired  set of complete 
 l-bits for the Nicolai model.  (Those  are actually  the p-bits for  the fermion system.)  
 Obviously  $H_{\rm{classical}}$ is  diagonalized with respect to 
 $\{n_i;\; i\in\Z\}$,  whereas   the hopping term 
 $H_{\rm{hop}}$ can  not be.
Hence $\Hnic$ can  not be  diagonalized with respect to  $\{n_i;\; i\in\Z\}$. 
   It is now concluded that  $\{n_i;\; i\in\Z\}$ is not a   desired  set of complete  l-bits. 
In other words, the l-bits 
$\{n_i;\; i\in\Z\}$ determined by classical  ground states 
do not coincide with    l-bits (constants of motion)
 determined by  the time evolution.

Finally we shall propose   some future  problems on the dynamics of the Nicolai model. 
\begin{enumerate}

\item Determine all  constants of motion for the time evolution of 
 the Nicolai model.

\item Determine concrete (quasi-)local  operators which  
 are ergodic  for the time evolution of the Nicolai model. 
 Those are state-dependent  \cite{CAUX} \cite{ZOTOS}.

\item Determine whether  the time evolution of the Nicolai model 
 has chaotic properties or it is completely frozen. 
 (What is the value of dynamical entropy for 
 the time evolution with respect to invariant states \cite{MOHYA}?)

\item  Consider   quench  dynamics  of   the Nicolai model.
 We may refer to  \cite{CUBERO-MUSS}.

\item  How does  a generalized Gibbs ensemble 
  \cite{RIGOL-GGE}  look like?
A notable point with  the Nicolai model 
 is that it has  many fermionic  constants  of motion.

\item  We  may speculate   a randomized    Nicolai model
 by changing  its   coefficients  by  random variables 
 as the  supersymmetric  Sachdev-Ye-Kitaev model \cite{FU}.

\end{enumerate}

\begin{acknowledgements}
I  thank 
 Prof. Deguchi, Prof. De Roeck, Prof. Huveneers,
Prof.  Katsura,  Dr. Padmanabhan, Prof. Nakayama and  Prof. Narnhofer 
 for  discussion. 
It is with great honor that I mention my memory  of  Prof. Masanori Ohya (1947-2016)
who developed Information Dynamics and conducted interdisciplinary studies.
Prof. Ohya  frequently organized  conferences  in Japan   where I  
encountered  various topics  and many  researchers. The  
scientific  interactions that I had with Prof. Ohya 
 helped motivate my investigation into quantum dynamical systems.
\end{acknowledgements}

% BibTeX users please use one of
%\bibliographystyle{spbasic}      % basic style, author-year citations
%\bibliographystyle{spmpsci}      % mathematics and physical sciences
%\bibliographystyle{spphys}       % APS-like style for physics
%\bibliography{}   % name your BibTeX data base

\begin{thebibliography}{}
%
% and use \bibitem to create references. Consult the Instructions
% for authors for reference list style.
%
%%%%%%%%%%
\bibitem{BASKO}Basko, D., Aleiner, I., Altshuler, B.:
{Metal-insulator transition in a weakly interacting many-electron system with
localized single-particle states.} 
 Ann. Phys. {321}, 1126-1205 (2006) 
%%%
%%%%
%%%%
%%%%%%%%%%%%%
\bibitem{BRI} 
Bratteli, O.,  Robinson, D.W.:
{Operator algebras 
and quantum statistical mechanics 1.} 
Springer-Verlag, 2nd Edition (1987)
\bibitem{BRII} 
Bratteli, O.,  Robinson, D.W.:
{Operator algebras 
and quantum statistical mechanics 2.} Springer-Verlag,  2nd Edition (1997)
%%%%
%%%
\bibitem{CAUX}
Caux, J-S.,  Mossel, J.:
{Remarks on the notion of quantum integrability.}
J. Stat. Mech. P02023 (2011) 
%%
\bibitem{CHANDRAN}
Chandran, A., Kim, I.H., Vidal, G., Abanin, D.A.:
  {Constructing local integrals of motion in the many-body localized phase.}
 Phys. Rev. B {91}, 085425 (2015)
%%%%
%%%%%
%%%
\bibitem{CUBERO-MUSS} 
  Cubero, A.C., Mussardo, G.,  Panfil, M.: 
{Quench dynamics in two-dimensional integrable SUSY models.} 
J. Stat. Mech. 033115  (2016)
%%%
%%%
%%%%%%
%%%%%%
\bibitem{Deloc}De Roeck, W., Huveneers, F.:
 {Scenario for delocalization in translation-invariant systems.} 
Phys. Rev. B. {90}, 1017-1082 (2014).
%%%%%%
%%%%%%%
\bibitem{DKR66}Doplicher, S.,  Kastler, D., Robinson, D.W.:
{Covariance algebras in field theory and statistical mechanics.} 
 Commun. Math. Phys. {3}, 1-28 (1966)
%%
%%%
\bibitem{FU}Fu, W., Gaiotto, D., Maldacena, J., Sachdev, S.:
{Supersymmetric Sachdev-Ye-Kitaev models}
Phys. Rev. D 95, 026009  (2017)
%%; Erratum Phys. Rev. D 95, 069904 (2017)
%%%\bibitem{FU}Wenbo Fu, Davide Gaiotto, Juan Maldacena, and Subir Sachdev:
%%
\bibitem{EMERGENT}Geraedts, S.D., Bhatt, R.N., Nandkishore, R.: 
 {Emergent local integrals of motion without a complete set of 
localized eigenstates.}
 Phys. Rev. B {95}, 064204 (2017)
%%%%%%%%%%%%%%
%%%%%%%%%%%%%%%%
\bibitem{HHW}Haag, R., Hugenholz, N.M., Winnink, M.:
 {On the equilibrium states in quantum statistical mechanics.}
 Commun. Math. Phys. {5}, 215-236 (1967)
%%%%%%%%%
%%%
%%%%%%%%%%%%
\bibitem{PHENO}Huse, D.A., Nandkishore, R., Oganesyan, V.:
 {Phenomenology of fully many-body-localized systems.}
 Phys. Rev. B {90}, 174202 (2014)
%%%%
%%%%%
%%%%%%%
\bibitem{IMB}Imbrie, J.E.:  
{On many-body localization for quantum spin chains.}
 J. Stat. Phys.  {163},  998-1048 (2016)
%%
\bibitem{IMB-ROS-SCARD}Imbrie, J.E., Ros, V., Scardicchio, A.:  
{Review: Local integrals of motion in many-body localized systems}
 Ann. Phys. {529}, 1600278 (2017) 
%%% arXiv:1609.08076
%%%
%%%%
\bibitem{KMN}Katsura, H.,  Moriya, H., Nakayama, Y.:
 eprint arXiv:1710.04385. 
 On highly degenerate supersymmetric ground states of the fermion lattice model by Nicolai.
%%{On highly degenerate  supersymmetric ground states
%%%of the  fermion lattice model by Nicolai.}
%%%%
%%%
%%%%%%
%%%%%
%%%
\bibitem{MAZUR}Mazur, P.:  
{Non-ergodicity of phase functions in certain systems.}
Physica. {43}, 533-545 (1969)
%%%%%%%%%%
\bibitem{MOHYA}Miyadera, T., Ohya, M.:  
{Quantum dynamical entropy of spin systems.}
 Rep. Math. Phys. {56}, 1-10  (2005)
%%%%%%%%%%
\bibitem{AHP16}Moriya, H.:
 {On  supersymmetric fermion lattice  systems}
{Ann. Inst. Henri. Poincar\'e} 
 {17}, 2199-2236 (2016)
%%%%%%%%%%%%%
%%%%%%%%%%%%
%%%%%%%%%%%
\bibitem{HUSErev}Nandkishore, R.,  Huse, D.A.:  
{Many body localization and thermalization in quantum 
statistical mechanics.} 
 Annu. Rev. Condens. Matter Phys. {6}, 15-38 (2015)
%%
\bibitem{NANDURI-KIM}
Nanduri, A., Kim, H., Huse, D.:  
{Entanglement spreading in a many-body localized system.}
 Phys. Rev. B {90}, 064201 (2014)
%%%%%%%%%%%%
%%%%%%%%%%
%%%%%%%%%%
\bibitem{NAR15}Narnhofer, H.:  
{Are interacting systems norm asymptotic abelian?} 
Quantum Studies: Mathematics and Foundations
{2}, 417-433  (2015)
%%%
%%%
%%%
\bibitem{NIC}Nicolai, H.:
{Supersymmetry and spin systems.} 
 J. Phys. A: Math. Gen. {9}, 1497-1505 (1976)
%%%%%%%%%
%%%
\bibitem{OGANE-HUSE07}
Oganesyan, V., Huse, D.A.:
{Localization of interacting fermions at high temperature.}
 Phys. Rev. B {75}, 155111 (2007)
%%
%%%
\bibitem{PADMAN}Padmanabhan, P., Rey, SJ., Teixeira, D., Trancanelli, D.:
{Supersymmetric many-body systems from partial symmetries  integrability, localization and scrambling.}
J. High Energy. Phys. 136, 1705 (2017) 
%%%%%%%%
%%%%
%%%
\bibitem{POLK}Polkovnikov, A., Sengupta, K., Silva, A., Vengalattore, M.:  
{Colloquium: Nonequilibrium dynamics of closed interacting quantum systems.}
Rev. Mod. Phys. {83}, 863-883 (2011)
%%%%%
\bibitem{RADE}Rademaker, L.,  Ortu\~no, M.:
Explicit local integrals of motion for the many-body localized state.
Phys. Rev. Lett. 116, 010404 (2016) 
%%%
%%%
%%
\bibitem{RSI}Reed, M., Simon B.:
 {Functional Analysis.} 
Methods of Modern Mathematical Physics I:
revised and enlarged Edition,
 Academic  Press, (1980)
%%%
%%%
\bibitem{RIGOL-GGE}
Rigol, M, Dunjko, V., Yurovsky, V.,Olshanii, M.:
 {Relaxation in a completely integrable many-body quantum system: 
An ab initio study of the dynamics of the highly excited states of 1D 
lattice hard-core bosons.} 
 Phys. Rev. Lett. {98}, 050405 (2007)
%%%
%%%
\bibitem{ROS15}Ros, V., M\"{u}ller, M., Scardicchio, A.:
 Integrals of motion in the many-body localized phase.
Nucl. Phys. {B}, 420-465  (2015)
%%%
%%%
%%%%
\bibitem{SCHIULAZ15}
 Schiulaz, M.,  Silva, A.,  M\"{u}ller, M.: 
 Dynamics in many-body localized quantum systems without disorder.
Phys. Rev. B. {91}, 184202 (2015)   
%%%%%
%%%%
%%%
\bibitem{SERBYN13LCL}
Serbyn, M., Papi\'c, Z.,  Abanin, D.A.:
{Local conservation laws and the structure of the many-body localized states.}
 Phys. Rev. Lett. {111}, 127201
(2013)
%%%%%
%%%%
%%%
\bibitem{SIR-VER73}Sirugue, M., Verbeure, A:
On the lower bound of Mazur for autocorrelation functions. 
Physica. {65}, 181-186 (1973)
%%%
%%%
\bibitem{SMITHA-dfree}Smith, A., Knolle, J., Kovrizhin, D.L., Moessner, R.:
{Disorder-free localization}
Phys. Rev. Lett. 118, 266601 (2017).
%%%%
%%%%
%%%%%%%%%%%%%%%%%%
\bibitem{THIR4}Thirring, W.:  
 {A course in mathematical physics 4: 
Quantum mechanics of large systems.} 
  Springer-Verlag, 1983
%%%
%%%%
%%%%
\bibitem{WEIGERT}Weigert, S.:
 {The problem of quantum integrability.}
 Physica D: Nonlinear Phenomena. {56}, 107-119 (1992)
%%%
\bibitem{WEIN}Weinberg, S.:
 The quantum theory of fields III. 
 Cambridge University Press, 2000
%%
\bibitem{ZOTOS}Zotos, X., Naef, F., Prelovsek, P.:
 Transport and conservation laws.
Phys. Rev. B {55}, 11029 (1997)
%%%%
%%%%
%%%
%%%%
\end{thebibliography}

% Non-BibTeX users please use

\end{document}